\newtheorem{theorem}{Theorem}
\newtheorem{lemma}[theorem]{Lemma}
\newtheorem{proposition}[theorem]{Proposition}
\newtheorem{remark}[theorem]{Remark}
\newtheorem{definition}{Defintion}
\newcommand{\dd}{\;{\rm d}}
\renewcommand{\d}{{\rm d}}
\newcommand{\EE}{{\mathbb E}}
\newcommand{\NN}{{\mathbb N}}
\newcommand{\RR}{{\mathbb R}}
\renewcommand{\SS}{{\mathbb S}}
\newcommand{\xx}{{\bf x}}
\newcommand{\zz}{{\bf Z}}
\newcommand{\vv}{{\bf v}}
\newcommand{\DD}{{\mathcal D}}
\newcommand{\GG}{{\mathcal G}}
\newcommand{\FF}{{\mathcal F}}
\newcommand{\BB}{{\mathcal B}}
\newcommand{\MM}{{\mathcal M}}
\newcommand{\chg}{\textcolor{red}}
\begin{document}
\title{Kinetic models for topological nearest-neighbor interactions}
\maketitle

\begin{center}
{\large Adrien Blanchet} \\
IAST/TSE, Universit\'e Toulouse Capitole -- 21 All\'ee de Brienne 31000 Toulouse, France\\
email: Adrien.Blanchet@ut-capitole.fr\\
$\mbox{}$ \\
{\large         Pierre Degond }\\
Department of Mathematics, Imperial College London, London SW7 2AZ, United Kingdom\\
email: p.degond@imperial.ac.uk
\end{center}
\begin{abstract}
We consider systems of agents interacting through topological interactions. These have  been shown to play an important part in animal and human behavior. Precisely, the system consists of a finite number of particles characterized by their positions and velocities. At random times a randomly chosen particle, the follower, adopts the velocity of its closest neighbor, the leader. We study the limit of a system size going to infinity and, under the assumption of propagation of chaos, show that the limit kinetic equation is a non-standard spatial diffusion equation for the particle distribution function. We also study the case wherein the particles interact with their $K$ closest neighbors and show that the corresponding kinetic equation is the same. Finally, we prove that these models can be seen as a singular limit of the smooth rank-based model previously studied in~\cite{Blanchet_Degond_JSP16}. The proofs are based on a combinatorial interpretation of the rank as well as some concentration of measure arguments.
\end{abstract}

\noindent
{\bf Keywords:} rank-based interaction, spatial diffusion equation, continuity equation, concentration of measure
\section{Introduction}
\label{sec_intro}

In the literature on animal behavior including fish \cite{Aoki_JapanFish82}, birds \cite{Lukeman_etal_PNAS10} and even pedestrians \cite{Helbing_Molnar-PRE95}, interactions between individuals are often assumed to be strongly dependent on their relative distance. However it has recently been demonstrated that individuals in bird flocks interact with their nearest neighbors irrespective of their distance~\cite{Ballerini_etal_PNAS08, Cavagna_etal_M3AS10}. More precisely, the authors in~\cite{Ballerini_etal_PNAS08} claim that each bird interacts with between six to eight of its closest neighbors. The authors coined the term of ``topological interaction'' to  refer to such interaction mechanisms and ''topological distance'' to refer to how many other individuals were closer. Even though the reality of this topological interaction has been debated \cite{Eriksson_etal_BehavEcol10}, it now seems to receive consensus following reports that self-propelled particle models based on topological interactions successfully reproduce the observed experimental features \cite{Bode_etal_Interface10, Camperi_etal_InterfaceFocus12, Ginelli_Chate_PRL10}.  

The understanding that birds interact through topological rather than metric distance has generated an intense literature. Topological interactions have been introduced to the  modeling of many natural phenomena, from birds \cite{Hemelrijk_Hildenbrandt_PlosOne11} to pedestrians \cite{Jian_etal_ChinPhysB10}. Mixed metric-topologic interactions have also been proposed \cite{Niizato_etal_Biosystems14, Shang_Bouffanais_EurPhysJB14}. Proof of flocking under topological interaction has been given in \cite{Haskovec_PhysD13, Martin_SystContLett14, Wang_Chen_Chaos16}, while speed to consensus has been shown to depend on the number of interacting neighbors in \cite{Shang_Bouffanais_SciRep14}. Similarly, interactions depending of the behavior of the closest neighbor are probably at play in human interactions such as portfolio theory \cite{Banner_etal_AnnAppProb05, Fernholtz_book02, Ichiba_etal_AnnAppProb11}, competition between coworkers within a firm, risk-taking among traders or aggressive behavior to reach a sexual partner, see for example \cite{davis1993experimental, hill2010risk}. Rank-based dynamics bears similarities with rearrangement (see e.g. \cite{Brenier_TransSocRoyA13} and references therein). 

One of the striking features of topological compared to metric interactions is their scale-invariance property. Indeed, irrespective of the bird concentration within the flock, \cite{Ballerini_etal_PNAS08} proved that the interaction features remain unchanged. In human cognition models, it is also more relevant to restrict interactions to the closest neighbors, as the attention of a subject is intrinsically directed to only a few people around him/her \cite{norman1976memory}. Interactions with close neighbors do not preclude interactions at a longer range, as interactions spread with the conscious or unconscious signals sent by the subjects in response to these interactions. However, issues such as quantify the propagation speed of information sent via topological interactions are poorly understood so far, in particular in the presence of a large number of subjects. This calls for a large-scale theory of topological interactions, or in other words, for the development of meso or macroscopic models of particle systems connected through topological interactions. The aim of this article is specifically to derive a macroscopic model for a large population of particles interacting through topological interactions, starting from a simple microscopic model. To the best of our knowledge, the present work and its predecessor~\cite{Blanchet_Degond_JSP16} are the first to develop a rigorous coarse-graining of topological interactions, with the exception of \cite{Haskovec_PhysD13} which tackled a similar question but for a different kind of interaction, closer to mean-field type interactions.  

More precisely, we will consider a system with interacting mobile agents. At Poisson random times a given agent selects a partner to interact with according to a probability rule which depends on the proximity rank of the partner. The interaction rule is then very simple: the agent changes its velocity to align with that of its partner. The goal of the present work is, by letting the number of agents tend to infinity, to derive an equation for the probability distribution of the agents in phase space (positions, velocities). Thanks to the choice of this simple interaction rule, inspired from earlier work \cite{Carlen_etal_PhysD13, Carlen_etal_M3AS13}, we can concentrate on the mathematical aspects of this derivation. In previous work by the authors~\cite{Blanchet_Degond_JSP16}, the probability rule solely depended on the proximity rank normalized by the total number of interacting partners (or equivalently, on a proximity rank expressed as a percentage). This rule made the number of potential interacting partners tend to infinity as the number of agents also did so. We will refer to this rule as the ''smooth rank-based dynamics''. By contrast, in the present work, we will concentrate on the case where there is only one interaction partner (the nearest one) or a finite number of them (the $K$ nearest ones), even in the limit of the number of agents tending to infinity. We will refer to these dynamics as the ``nearest-neighbor'' or ``$K$-nearest-neighbor'' dynamics. 

In this paper, we show (under the Propagation of Chaos assumption) that the kinetic equation resulting from the nearest-neighbor or $K$-nearest neighbor dynamics is a nonlinear spatial diffusion equation for the particle distribution function in phase space (position, velocity). This equation has a non-classical feature as it involves a spatial anti-diffusion of the density (which is a velocity integral of the distribution function). We will show that this term results from the constraint that the density must satisfy the continuity equation, a constraint resulting from the preservation of the number of particles in the course of an interaction. By contrast, in the previous work~\cite{Blanchet_Degond_JSP16} relative to the smooth rank-based dynamics, we showed (also under the Propagation of Chaos assumption), that the resulting kinetic equation involved a spatially non-local integral equation and that the continuity equation for the density was also satisfied. In the present paper, we also show that we can pass from the non-local integral equation issued from the smooth rank-based dynamics to the nonlinear spatial diffusion equation for the nearest-neighbor dynamics by a process involving a singular concentration of the kernel of the integral equation. This provides a vision of the nearest-neighbor dynamics as a singular limit of the previously studied smooth rank-based dynamics.  

Rank-based dynamics (either smooth or nearest-neighbor) are natural from a mathematical point of view. The proximity rank includes information about the most immediate interaction partners of a given particle. Although the rank is a highly non-linear function of the particle positions and is subject to jumps when two particles cross, it has robust properties such as invariance by permutation of the particle numbers, and has combinatorial interpretation: the probability for an agent to have a rank $k$ with respect to agent $i$ is equal to the probability of having $k-1$ agents between them. Our results strongly rely on this interpretation of the rank, together with some concentration of measure arguments. Rank-based dynamics also exhibit a certain universality, as their kinetic model does not depend on the (finite) number of interacting agents. The kinetic equation that we derive in this work leads to many questions. This equation is a non-standard diffusion equation with mainly unkonwn properties, from the perspective of well-posedness, large-time behavior, regularity, etc. We believe that these questions open fascinating new directions of research in kinetic theory.

Kinetic models of flocking or swarming behavior have been widely investigated in the context of metric interactions. The literature is vast and it is virtually impossible to be exhaustive. Below is a sample of major publications on this topic. Derivation of kinetic models from underlying particle models have been established in \cite{Bolley_etal_AML12, Degond_Motsch_M3AS08, Ha_Tadmor_KRM08, Parisot_Lachowicz_KRM16}. Flocking behavior and pattern formation has been investigated in \cite{Albi_etal_SIAP14, Bertozzi_etal_CMS15, Carrillo_etal_SIMA10, Ha_Liu_CMS09, Ha_Tadmor_KRM08, Motsch_Tadmor_JCP11}.  Equilibria and phase transitions in kinetic flocking models have been studied in \cite{Barbaro_etal_MMS16, Degond_etal_JNLS13, Degond_etal_ARMA15}. Passage from kinetic to hydrodynamic descriptions of flocking has been investigated in \cite{Bellomo_Soler_M3AS12, Degond_etal_JNLS13, Degond_etal_ARMA15, Degond_Motsch_M3AS08, Eftimie_JMB12, Fornasier_etal_PhysD11}. Numerical simulation methods have been put forward in \cite{Albi_Pareschi_MMS13, Gamba_etal_JCP15}.

The organization of the paper is as follows. Section \ref{sec_models} is dedicated to the presentation of the models and our main results, as well as a detailed discussion of them. Section~\ref{sec:1} covers the nearest-neighbor case and the derivation of the kinetic model in this case. Section~\ref{sec:2} extends these results to the case where the particles interact with their $K$ closest neighbors. Section \ref{sec_rbtonearest} develops the proof that the kinetic model of the nearest-neighbor (or $K$-nearest-neighbor) interaction is the limit of the kinetic smooth rank-based interaction model of ~\cite{Blanchet_Degond_JSP16}. A conclusion to this article is given in Section~\ref{conclusion}.

\section{Models and main results}
\label{sec_models}

\subsection{General framework}
\label{subsec_general_framework}

Consider a set of $N$ particles. The particle $i$ is characterized by its position $x_i \in {\mathbb R}^d$ and its velocity $v_i \in {\mathbb R}^d$ where $d\geq 1$ is both the spatial and velocity dimension. The particles $\{ (x_1(t), v_1(t)), \ldots, (x_N(t), v_N(t)) \}$ are subject to the following dynamics 
\begin{itemize}
\item[-] The dynamics is a succession of free-flights and collisions. 
\item[-] During free-flight, particles follow straight trajectories 
\begin{equation*}
\left\{
  \begin{array}{l}
    \dot x_i = v_i, \vspace{.1cm}\\
 \dot v_i = 0.
  \end{array}
\right.
\end{equation*}
\item[-] Let $(\pi_{ij}^N)_{(i,j)\in \{1,\ldots,N\}^2}$ be a stochastic matrix, {\it i.e.} for all $(i,j) \in \{1,\ldots,N\}^2$, $\pi_{ij}^N\in [0,1]$ and for all $i \in \{1,\ldots,N\}$, $\sum_{j=1}^N \pi_{ij}^N=1$. At Poisson random times with a rate equals to $N \lambda(N)$, particles undergo the following collisions process: pick a particle $i$ in $\{1, \ldots, N\}$ with uniform probability ${1}/{N}$ ; then pick a collision partner $j$ with probability $\pi_{ij}^N$ and perform the collision: 
\begin{equation*}
\left\{
  \begin{array}{l}
	(x_i,x_j) \mbox{ remains unchanged, } \vspace{.1cm}\\
 (v_i,v_j) \mbox{ is changed into } (v_j,v_j). 
  \end{array}
\right.
\end{equation*}
\end{itemize}
We will assume that $\pi_{ij}^N$ is a function of the particle positions $(x_1, \ldots, x_N)$, i.e. $\pi_{ij}^N = \pi_{ij}^N (x_1, \ldots, x_N)$ and is permutation invariant, {\it i.e.} for any permutation $\sigma \in \mathfrak{S}_N$ where $\mathfrak{S}_N$ denotes the set of permutations of $\{1,\ldots,N\}$, we have
\begin{equation}\label{eq:permpi}
 \pi_{\sigma(i) \sigma(j)}^N(x_{\sigma(1)}, \ldots, x_{\sigma(N)}) = \pi_{ij}^N(x_1, \ldots, x_N). 
\end{equation}
The function $\lambda(N)$ is an appropriate scaling factor which will be defined and discussed below. The choice to define the Poisson random times as $N \lambda(N)$ allows to avoid heavy notations during the computations. Specifically, $\lambda(N)$ represents the rate of jump per individual.

To simplify the notation, when no confusion is possible, we will denote $\xx:=(x_1,\ldots,x_N)$,  $\vv:=(v_1,\ldots,v_N)$, $Z_i:=(x_i,v_i)$, $\zz:=(Z_1,\ldots,Z_N)$ and $\dd \zz:=\dd x_1 \dd v_1\ldots\dd x_N \dd v_N$. We will also use $f(\!\dd \zz)$ instead of $f(\zz)\dd x$.

The system will be described through the master equation which provides the dynamics of the $N$-particles distribution $f^{(N)}(Z_1,\ldots,Z_N)$, i.e. the joint probability of particles $1$ to $N$ to be at location (in phase space) $Z_1$ to $Z_N$. As shown in~\cite{Blanchet_Degond_JSP16}, thanks to~\eqref{eq:permpi} if $f^{(N)}(t)$ is permutation invariant at time $t=0$ then it is permutation invariant for all times. We study the limit of this dynamics when the number of particles goes to $\infty$. First define the $k$-th marginal ($k \in \{1, \ldots , N \}$) as
\begin{equation*}
  f^{(k)}_N(Z_1,\cdots,Z_N, t):=\int f^{(N)}(\zz)\dd Z_{k+1}\ldots\dd Z_N.
\end{equation*}
We assume that ``Propagation of Chaos'' holds true {\it i.e.} $\forall \zz \in {\mathbb R}^{2nN}$, $\forall t \in [0,\infty)$: 
\begin{equation}\label{eq:assumptionf}
  f^{(N)}(Z_1,\cdots,Z_N, t)=\prod_{\ell=1}^N f^{(1)}_N(Z_\ell,t) + \mbox{ negligible terms as } N \to \infty. 
\end{equation}
This definition is formal as long as we do not specify in which topology the remainder becomes negligible. Making this definition rigorous would require a topology on spaces of functions of an arbitrary number of variables, which is a highly technical endeavor. Usual definitions of propagation of chaos assume that the $k$-th marginal factorizes in the limit $N \to \infty$ for any arbitrary $k$ and so, avoids this problem by considering functions of a fixed number $k$ of variables. Unfortunately, this more tractable definition does not suffice here as the number of particles interacting with a given particle is not bounded. So, we will use the formal definition (\ref{eq:assumptionf}) without further discussion and leave the formulation of a rigorous theory for future work. Assuming that $f^{(1)}_N \to f$ in the limit $N \to \infty$, we aim to derive the equation satisfied by $f$.
\medskip

In this article we are interested in situations in which a population of agents interact not based on their metric distance but on their topological distance. To do so we introduce the following definition:
\begin{definition}[Rank]
Consider $N$ particles located at $x_1, \ldots, x_N$. Consider the $i$-th particle and order the list $\big(|x_j-x_i|\big)_{j=1, \ldots, N, \, j \not = i}$ by increasing order and denote by $r^N(i,j)\in \{1, \ldots,  N-1\}$ the position of the $j$-th item in this list. If two indices $j$ and $j'$ are such that  $|x_j-x_i| = |x_{j'}-x_i|$, then we choose arbitrarily an ordering between these two numbers. We define $r^N(i,i)=0$. Now, we define \emph{the rank of $j$ with respect to $i$} as: 
$$ R^N(i,j) = \frac{r^N(i,j)}{N-1} \quad \in \quad \bigcup_{k=1}^{N-1} \Big\{ \frac{k}{N-1} \Big\}. $$
\end{definition} 
\begin{remark}
The notation $r$ and $R$ have been interchanged compared with \cite{Blanchet_Degond_JSP16}.
\end{remark}
We assume that the interaction probabilities $\pi_{ij}^N$ depend on the particle positions $(x_1, \ldots, x_N)$ only through the rank of $j$ with respect to $i$, i.e.  $\pi_{ij}^N = \Pi (r^N(i,j))$ where the function $\Pi$ will take different forms according to the chosen model. Since the rank of $j$ with respect to $i$ is an intrinsic property of the positions of the particles and does not depend on how they are numbered, we have the following properties of the rank:
\begin{remark} Let $(x_1, \ldots, x_N)$ be a set of $N$ particles.
  \begin{itemize}
  \item[(i)] The rank $r^N(i,j)$, and hence $\pi_{ij}^N$, is a function of $(x_1, \ldots, x_N)$, \emph{i.e.}  $$r^N(i,j) = r^N(i,j)(x_1, \ldots, x_N)\;.$$ More precisely, we consider the rank $r^N(i,j)$ as a function of $L^\infty({\mathbb R^{nN}})$.
  \item[(ii)] The rank is permutation invariant, {\it i.e.} for any permutation $\sigma \in \mathfrak{S}_N$ where $\mathfrak{S}_N$ denotes the set of permutations of $\{1,\ldots,N\}$, we have
\begin{equation*}
 r^N(\sigma(i), \sigma(j))(x_{\sigma(1)}, \ldots, x_{\sigma(N)}) = r^N(i,j)(x_1, \ldots, x_N). 
\end{equation*}
  \end{itemize}
\label{lem:rank} 
\end{remark}
\medskip

We will be considering three different type of collision dynamics which we will describe in the following three sections.


\subsection{ Nearest neighbor dynamics}
\label{subsec_nearest_neighbour}

In this article we first consider the case where the collision takes place with the closest neighbor {\it i.e.}
$$\pi_{ij}^N= \delta_1(r^N(i,j))$$
where $\delta_x$ stands for the Dirac delta centered at $x$.
\medskip

We will prove the following: Assume Propagation of Chaos~\eqref{eq:assumptionf} and assume a specific form for $\lambda(N)$ to be made precise \chg{in~\eqref{eq:lambda}} below. Then, if $f^{(1)}_N \to f$ and $\rho^{(1)}_N:=\int f^{(1)}_N\dd v\to \rho = \int  f \dd v$ as $N \to \infty$, $f$ and $\rho$ satisfy  
\begin{equation}
  \partial_t f(x,v) + v \cdot \nabla_x f(x,v) = Q[f](x,v), 
	\label{eq:kinetic_nn}
\end{equation}
where the collision operator $Q[f]$ is defined by
\begin{equation}
  Q[f]:=  \frac{1}{\rho^{\frac{2}{d}}}    \left[\Delta_x f - \frac{f}{\rho }\Delta_x \rho\right]\;.
	\label{eq:collision_nn}
\end{equation}

Note that we do not intend to give a fully rigorous derivation of this model but of course some assumptions have to be made to ensure that $\rho$ for instance remains positive in the considered domain. In the same spirit the topology in which the convergence takes place will be omitted here and discussed in future works.

\subsection{ $K$-nearest neighbor dynamics}
\label{subsec_k_near_neighbour}

Let $K$ be a fixed integer in $\{1,\ldots,N\}$ and let $(\alpha_k)_{k \in \{1,\ldots,K \}} $ be a given sequence such that $\sum_{k=1}^K\alpha_k=1$. We extend this sequence for $k$ in $\{1,\ldots,N\}$ by taking $\alpha_k=0$ for all $k>K$. We consider a collision rule by which particle  $i$ adopts the $j$th-particle velocity with probability $\alpha_k$ if particle $j$ is particle $i$'s $k$-th nearest neighbor, i.e. the probability $\pi_{ij}$ is given by $\pi_{ij}=\sum_{k=1}^N\alpha_k \,\delta_k \left( r^N(i,j)\right)$. Section~\ref{subsec_nearest_neighbour} on nearest neighbor interaction was dedicated to the case $\alpha_1=1$ and $\alpha_k \equiv 0$ for all $k>1$, i.e. $K=1$. The rule considered here is a generalization to the $K$ nearest neighbors i.e. : 
\begin{equation}o
\pi_{ij}^N (\xx) =  \sum_{k=1}^N\alpha_k\,\delta_{k}(r^N(i,j)(\xx))\;.
\label{eq:piKnn}
\end{equation}
Under the propagation of chaos~\eqref{eq:assumptionf} and upon choosing a convenient scaling function $\lambda(N)$ given in~\eqref{eq:lambdaKn} and discussed in Remark \ref{rem:orderofmag}, we will prove that $f$ satisfies the same equation \eqref{eq:kinetic_nn} as before, with the same collision operator \eqref{eq:collision_nn}. 

\subsection{Smooth rank-based dynamics}
\label{subsec_smooth_rank}

This case has been studied in~\cite{Blanchet_Degond_JSP16}. We recall it here for reference and further comparisons with the other models introduced below. We introduce a function $K$: $R \in [0,1] \mapsto K(R) \in [0,\infty)$ such that 
$$ \int_0^1 K(R) \dd R = 1.$$
We define
$$ K^N (R) = \dfrac{K(R)}{\sum_{k=1}^{N-1} K \Big( \frac{k}{N-1} \Big)}\,,$$
in order to have for any $i \in \{1, \ldots, N\}$: 
$$ \sum_{\substack{j=1\\j\not=i}}^N K^N\big(R^N(i,j) \big) = \sum_{k=1}^{N-1} K^N \left( \frac{k}{N-1} \right) = 1\;. $$
In this way, for any $i \in \{1, \ldots, N\}$, the collection $(\pi_{ij})_{j=1, j \not = i}^N$, where 
$$ \pi_{ij}^N = K^N \big( R^N(i,j) \big), $$
defines a discrete probability measure on the set $\{j\in \{1, \ldots, N\}, \, \, j \not = i\}$.
The collision rule is then as follows
\begin{definition}[Smooth rank-based dynamics]
 Pick a particle $i$ in $\{1, \ldots, N\}$ with uniform probability ${1}/{N}$ ; then pick a particle $j \not = i$ with probability $\pi_{ij}^N = K^N \big( R^N(i,j) \big)$ and perform the collision:
\begin{equation*}
\left\{
  \begin{array}{l}
	(x_i,x_j) \mbox{ remains unchanged, } \vspace{.1cm}\\
 (v_i,v_j) \mbox{ is changed into } (v_j,v_j). 
  \end{array}
\right.
\end{equation*} 
\end{definition}
\medskip

Assuming that $f^{(1)}_N \to f$ and $\rho^{(1)}_N:=\int f^{(1)}_N\dd v\to \rho = \int  f \dd v$, as $N \to \infty$, and assuming $\lambda(N) = 1$,  we prove in~\cite{Blanchet_Degond_JSP16} that $f$ is a solution of the kinetic equation:
\begin{equation}
\label{eq:kinsrbd}
  \frac{\partial f}{\partial t}(x,v)+v\cdot \nabla_x f(x,v)=\rho(x) \int f(x',v)\,K\left( M_\rho (x,|x'-x|)\right)\dd x' - f(x,v), 
\end{equation}
where $M_\rho$ is the partial mass of $\rho$ and is defined by
\begin{equation}
  M_\rho(x,s)=\int_{\{x' \in {\mathbb R}^d \, \, | \, \, |x'-x| \leq s \}} \rho(x')\dd x'\;.
	\label{eq:Mrho}
\end{equation}

\subsection{Discussion}
\label{subsec_discussion}

As mentioned in Sections \ref{subsec_nearest_neighbour} and \ref{subsec_k_near_neighbour}, the kinetic models derived from the nearest neighbor interaction or the $K$-nearest neighbor interactions are the same. On the other hand, these models differ quite significantly from that obtained from the smooth ranked-based dynamics of \cite{Blanchet_Degond_JSP16} as recalled in Section \ref{subsec_smooth_rank}. However, we show in Section \ref{sec_rbtonearest} that the former are limits of the latter when the interaction kernel $K$ concentrates (with a convenient scaling) near zero. In this limit, since only the closest neighbors interact, and these closest neighbors are likely to be spatially close (especially when the density is large) the spatially non-local integral operator appearing in \eqref{eq:kinsrbd} converges to the diffusion operator \eqref{eq:collision_nn}. Note that this diffusion is multiplied by an inverse power of the density. This is easily understood as, when the density is small, the particles are very far apart, resulting in spatial communications between the particles over larger distances, and eventually, into a larger diffusion coefficient. This interpretation is reinforced by the fact that the inverse power of the density depends on the dimension, in the same way as the scaling between the inter-particle distance and the density depends on the dimension. 

In \cite{Blanchet_Degond_JSP16}, we noted that any solution of the smooth rank-based dynamic kinetic model \eqref{eq:kinsrbd} satisfies the mass conservation equation
$$ \partial_t \rho + \nabla_x \cdot (\rho u) = 0, $$
with $\rho = \int f \, \dd v$ and $\rho u = \int f \, v \dd v$. 
The nearest-neighbor kinetic model \eqref{eq:kinetic_nn} also satisfies the mass conservation equation. To see this, it is enough to show that $\int Q(f) \, \dd v = 0$. But we easily check that it is the case. Indeed: 
$$ 
\int Q(f) \, \dd v = \frac{1}{\rho^{\frac{2}{d}}}   \left[\Delta_x \big( \int f \, \dd v \big) - \frac{1}{\rho } \big( \int f \, \dd v \big)  \Delta_x \rho \right] \\= 
\frac{1}{\rho^{\frac{2}{d}}}   \left[\Delta_x \rho -  \Delta_x \rho \right] = 0. 
$$
The collision operator~\eqref{eq:collision_nn} has the form of a spatial diffusion of $f$ but with an anti-diffusion in $\rho$. In fact, this anti-diffusion is exactly the term that needs to be added to turn a pure spatial diffusion  $ \rho^{-\frac{2}{d}}\Delta_x f$ into an operator that conserves mass i.e. that satisfies $\int Q(f) \, \dd v = 0$. 
At the microscopic level, the collision dynamics describes particles communicating their velocity to spatially distant (although close) neighbors. Therefore, information about the velocity distribution propagates to neighboring particles randomly leading to a spatial diffusion of this distribution. However, this spatial diffusion of the velocity distribution is constrained to obey local mass conservation, which is the reason of the anti-diffusion term acting on the density, as stressed above. If there is no spatial variation of the velocity distribution, particles communicating their velocity to their neighbors will not modify the velocity distribution in this neighborhood, which explains why the collision operator vanishes in this case. The well-posedness theory of \eqref{eq:kinetic_nn} is still open but from this remark we can conjecture that the model is well-posed. Indeed, apart from a mass-carrying component, the equation is a spatial diffusion. And the mass carrying component satisfies a continuity equation. So, it seems that the model couples two components each of which solves a well-posed equation. Of course, the coupling is non-trivial and this may result into a lack of well-posedness. This issue will be dealt with in future work.

\section{Nearest neighbor interaction}
\label{sec:1}

\subsection{Master equation and propagation of chaos}\label{mastereq}
\label{subsec_master}

\subsubsection{Master equation} 
\label{subsubsec:master2}

As recalled in~\cite{Blanchet_Degond_JSP16}, when the collisions occur at Poisson times with rate $N \lambda(N)$, the master equation for the $N$-particle distribution function $f^{(N)}$ in weak form is, for all test function $\phi^N:\zz \mapsto \phi^N(\zz)$, given by:
\begin{multline}\label{eq:me}
\int\partial_t f^{(N)}(\zz) \;\phi^N(\zz)\dd \zz = \sum_{i=1}^N \int f^{(N)}(\zz) \,(v_i \cdot \nabla_{x_i})\phi^N(\zz)\dd \zz  \\
+ \lambda(N)  \sum_{\substack{i,j=1\\j\not=i}}^N \iint \delta_{1}[r^N(i,j)(\xx)]\left[\phi^N(Z_1, \ldots, x_i, v_j , \ldots, x_j,v_j, \ldots Z_N)- \phi^N(\zz)\right] \\f^{(N)}(\zz) \dd \zz \;,
\end{multline}
where $\delta_1$ denotes the Dirac Delta centered at $1$. \

\subsubsection{Propagation of chaos and first marginal}
\label{subsubsec:propa}

The following proposition provides the equation for the first marginal $f^{(1)}_N$ under the Propagation of Chaos Assumption ~\eqref{eq:assumptionf}. This equation is given in weak form by using a test function $\phi(Z_1)$ only depending on the first coordinate $Z_1$ in Eq. \eqref{eq:me} and inserting~\eqref{eq:assumptionf} into \eqref{eq:me}. In all this section, we drop the drift term (the first term at the right-hand side of \eqref{eq:me}) as its treatment is classical (see e.g. \cite{Cercignani_etal_Spinger13}). This leads to the following proposition

\begin{proposition}[First marginal equation with propagation of chaos]
Under the propagation of chaos assumption~\eqref{eq:assumptionf},  for all test function $\phi(Z_1)$ and dropping the drift term as well as the negligible terms when $N \to \infty$ in~\eqref{eq:assumptionf}, we have:
\begin{multline}\label{eq:2et}
 \frac{1}{\lambda(N) \, (N-1)} \int\partial_t f^{(1)}_N(Z_1) \;\phi(Z_1)\dd Z_1\\
=  \int\left[\phi(x_1, v_2)-\phi(Z_1)\right]  \left(1- M_{\rho^{(1)}_N}(x_1,|x_2-x_1|)\right)^{N-2}f^{(1)}_N(\!\dd Z_1)   f^{(1)}_N(\!\dd Z_2) \;,
  \end{multline}
with $M_\rho$ given by \eqref{eq:Mrho}. 
\end{proposition}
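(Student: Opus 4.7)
My plan is to specialize the weak master equation \eqref{eq:me} to a test function that depends only on the first phase-space coordinate, use permutation invariance to collapse the double sum, insert the propagation of chaos ansatz \eqref{eq:assumptionf}, and then identify the remaining integral over the unseen particles as an elementary combinatorial probability associated with the rank.

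First, setting $\phi^N(\zz) = \phi(Z_1)$ in \eqref{eq:me}, I would observe that a collision labelled by indices $(i,j)$ changes only the velocity $v_i$ of the follower. Hence $\phi(Z_1)$ is modified only when $i=1$, and the collision term reduces to
$$\lambda(N)\sum_{j=2}^N \int \delta_{1}\bigl[r^N(1,j)(\xx)\bigr]\,\bigl[\phi(x_1,v_j)-\phi(Z_1)\bigr]\,f^{(N)}(\zz)\dd\zz.$$
Using the permutation invariance of $f^{(N)}$ (preserved by \eqref{eq:permpi}) together with that of the rank (Remark~\ref{lem:rank}(ii)), every term of this sum equals the $j=2$ term, producing a prefactor $N-1$. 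I would then substitute \eqref{eq:assumptionf} and integrate out the velocities $v_3,\ldots,v_N$, which appear neither in the increment of $\phi$ nor in the rank: each such integration contributes a factor $\rho^{(1)}_N(x_\ell) = \int f^{(1)}_N(Z_\ell)\dd v_\ell$, leaving only $f^{(1)}_N(\!\dd Z_1)f^{(1)}_N(\!\dd Z_2)$ paired with a spatial integral over $x_3,\ldots,x_N$.

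The crux of the argument is the combinatorial reading of the indicator $\delta_{1}[r^N(1,2)(\xx)]$: particle $2$ has rank $1$ with respect to particle $1$ if and only if none of the remaining particles $\ell=3,\ldots,N$ lies closer to $x_1$ than $x_2$ does, i.e.\ $|x_\ell - x_1| > |x_2-x_1|$ for every $\ell\geq 3$. Under chaos the positions $x_3,\ldots,x_N$ are i.i.d.\ with density $\rho^{(1)}_N$, so this event factorizes and each factor evaluates, by definition \eqref{eq:Mrho}, to
$$\int_{|x_\ell-x_1|>|x_2-x_1|}\rho^{(1)}_N(x_\ell)\dd x_\ell \;=\; 1 - M_{\rho^{(1)}_N}\bigl(x_1,|x_2-x_1|\bigr).$$
Taking the product of the $N-2$ identical factors, dividing by $\lambda(N)(N-1)$, and recollecting the outer integrations against $f^{(1)}_N(\!\dd Z_1)f^{(1)}_N(\!\dd Z_2)$ then yields \eqref{eq:2et}.

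I expect the main obstacle to be making the rank-to-indicator translation clean. The rank is a highly singular function of the positions and $\delta_1$ acting on it is formally an indicator rather than a distribution; one has to argue that the arbitrary tie-breaking convention in the definition of the rank is immaterial because ties occur on a set of Lebesgue measure zero for absolutely continuous $f^{(1)}_N$. Handling the sphere $\{|x_\ell-x_1|=|x_2-x_1|\}$ consistently is what legitimizes the intuitive combinatorial identity under the spatial integrals, and it is the one step where the formal treatment of \eqref{eq:assumptionf} needs to be supplemented by an absolute continuity hypothesis on the limit.
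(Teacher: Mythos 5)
Your proposal is correct and follows essentially the same route as the paper: specialize the master equation to $\phi^N(\zz)=\phi(Z_1)$, note that only collisions with $i=1$ contribute, use permutation invariance to get the factor $N-1$, insert the chaos ansatz, and evaluate $\EE_{\rho^{(1)}_N}[\delta_1(r^N(1,2))]$ as the probability that no other particle lies closer, giving $(1-M_{\rho^{(1)}_N}(x_1,|x_2-x_1|))^{N-2}$. The only cosmetic difference is that the paper reads this off from the full binomial law of the rank (evaluated at rank $1$) whereas you compute the nearest-neighbor event directly; your added remark on the measure-zero tie set is a sensible technical supplement the paper leaves implicit.
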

\begin{proof} As pointed out above, we drop the drift term for simplicity as this term can be handled with classical methods. Taking $\phi^N(\zz)=\phi(Z_1)$ as a test function in~the master equation \eqref{eq:me} and using the permutation invariance as well as a straightforward combinatorial argument shows that 
  \begin{align*}
\frac{1}{\lambda(N) \, (N-1)} & \int\partial_t f^{(N)}(\zz) \;\phi^N(Z_1)\dd \zz\\&\quad=  \int \delta_{1}[r^N(1,2)(\xx)]\,\left[\phi(x_1, v_2)-\phi(Z_1)\right] f^{(N)}(\zz) \dd \zz \\
&\quad + \int\delta_{1}[r^N(2,1)(\xx)]\left[\phi(Z_1)-\phi(Z_1)\right] f^{(N)}(\d \zz) \\
&\quad +(N-2)\int\delta_{1}[r^N(2,3)(\xx)]\left[\phi(Z_1)-\phi(Z_1)\right]  f^{(N)}(\d \zz)\;.
  \end{align*}
The last two terms obviously vanish.

Using the propagation of chaos assumption~\eqref{eq:assumptionf} we obtain
 \begin{multline}\label{eq:1}
  \frac{1}{\lambda(N) \, (N-1)} \int\partial_t f^{(1)}_N(\zz) \;\phi(Z_1)\dd \zz\notag\\
=  \int\left[\phi(x_1, v_2)-\phi(Z_1)\right]   \left\{\int\delta_{1}[r^N(1,2)(\xx)]\prod_{\ell=3}^N \dd \rho^{(1)}_N(x_\ell)  \right\}\notag\\
f^{(1)}_N(\d Z_1)  f^{(1)}_N(\d Z_2) \;.
  \end{multline}
As shown in~\cite{Blanchet_Degond_JSP16}, the integral
\begin{equation*}
  \int\delta_{1}(r^N(1,2)(\xx))\,\prod_{\ell=3}^N \dd \rho^{(1)}_N(x_\ell) 
\end{equation*}
can be interpreted as the expectation of $\delta_{1}(r^N(1,2)(\xx))$ for fixed $(x_1,x_2)$ when $x_3$, \ldots, $x_N$ are drawn randomly and independently with probability $\rho^{(1)}_N$. Using the combinatorial approach of~\cite{Blanchet_Degond_JSP16} to evaluate this probability, we obtain
\begin{align*}
  \EE_{\rho^{(1)}_N} \left[\delta_{1}(r^N(1,2)(\xx))\right] &= \sum_{R=1}^{N-1}\delta_{1}\left({R}\right) {{N-2} \choose {R-1}}\, p^{R-1}\,(1-p)^{N-2-(R-1)}\\
&= (1-p)^{N-2}, 
\end{align*}
with
\begin{equation*}
  p:=M_{\rho^{(1)}_N}(x_1,|x_2-x_1|),
\end{equation*}
which gives the stated result.
\end{proof}
For a given smooth function $\rho$ and $x \in \RR^d$, let $m\in[0,1) \mapsto R_{\rho}(x,m) \in [0,\infty)$ be the inverse function of $r \mapsto M_{\rho}(x,r)$. Note that $R_{\rho}(x,0)=0$. 
\begin{proposition} \label{prop:two}
For any test function $\phi(Z_1)$, we have
\begin{equation}
\label{eq:two}
\hspace{-0.15cm}
\int\partial_t f^{(1)}_N(Z_1) \;\phi(Z_1)\dd Z_1=  \int\left[\phi(x_1, v_2)-\phi(Z_1)\right] \FF_{f^{(1)}_N}(x_1,v_2)f^{(1)}_N(\d Z_1) \dd v_2,
  \end{equation}
where for given smooth functions $(x,v)\mapsto \eta(x,v)$ and $\rho_\eta(x)=\int \eta(x,v)\dd v$, we define
\begin{equation}\label{eq:mug}
  \FF^N_{\eta}(x,v):=\lambda(N) \, (N-1)\int_0^1 \GG_\eta(x,v,m)\,(1-m)^{N-2} \dd m\;,
\end{equation}
and
\begin{equation*}
  \GG_\eta(x,v,m) = \frac{\int_{\SS^{d-1}}\eta(x+R_{\rho_\eta}(x,m)\,\omega,v)\dd \omega}{\int_{\SS^{d-1}}\rho_\eta(x+R_{\rho_\eta}(x,m)\,\omega)\dd \omega}\;.
\end{equation*}
\end{proposition}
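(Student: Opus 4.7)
The plan is to start from the identity of the previous proposition and transform the right-hand side by a polar-coordinate change of variables in $x_2$ centered at $x_1$, followed by a substitution from radius to enclosed mass.

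First I would write $dZ_2 = dx_2\,dv_2$ and factor $f^{(1)}_N(Z_2) = f^{(1)}_N(x_2,v_2)$. Using $x_2 = x_1 + r\omega$ with $r = |x_2 - x_1|$ and $\omega \in \mathbb{S}^{d-1}$, one has $dx_2 = r^{d-1}\,dr\,d\omega$. Since neither $\phi(x_1,v_2)-\phi(Z_1)$ nor $(1-M_{\rho^{(1)}_N}(x_1,r))^{N-2}$ depends on $\omega$, I would move the $\omega$-integration inside and isolate the factor
\begin{equation*}
\int_{\mathbb{S}^{d-1}} f^{(1)}_N(x_1+r\omega,v_2)\,d\omega.
\end{equation*}

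Next, I would change variables from $r$ to $m := M_{\rho^{(1)}_N}(x_1,r)$. Writing the partial mass in polar coordinates,
\begin{equation*}
M_{\rho^{(1)}_N}(x_1,r) = \int_0^r s^{d-1}\!\!\int_{\mathbb{S}^{d-1}} \rho^{(1)}_N(x_1+s\omega)\,d\omega\,ds,
\end{equation*}
yields $dm = r^{d-1}\bigl(\int_{\mathbb{S}^{d-1}}\rho^{(1)}_N(x_1+r\omega)\,d\omega\bigr)\,dr$, and the inverse relation reads $r = R_{\rho^{(1)}_N}(x_1,m)$ with $m$ ranging over $[0,1)$ (since $\rho^{(1)}_N$ integrates to $1$). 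Substituting gives
\begin{equation*}
r^{d-1}\,dr = \frac{dm}{\int_{\mathbb{S}^{d-1}}\rho^{(1)}_N(x_1+R_{\rho^{(1)}_N}(x_1,m)\omega)\,d\omega},
\end{equation*}
so that after regrouping, the ratio that appears is precisely $\mathcal{G}_{f^{(1)}_N}(x_1,v_2,m)$ as defined in the statement.

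Multiplying by the $\lambda(N)(N-1)$ factor from the previous proposition and identifying $\mathcal{F}^N_{f^{(1)}_N}(x_1,v_2) = \lambda(N)(N-1)\int_0^1 \mathcal{G}_{f^{(1)}_N}(x_1,v_2,m)(1-m)^{N-2}\,dm$ finishes the proof. The only point requiring mild care is ensuring that $r \mapsto M_{\rho^{(1)}_N}(x_1,r)$ is a $C^1$-diffeomorphism onto $[0,1)$, which holds under the assumption that $\rho^{(1)}_N$ is a smooth positive density so that $R_{\rho^{(1)}_N}$ is well defined — consistent with the regularity remark the authors already made in Section~\ref{subsec_nearest_neighbour}. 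Nothing deeper than this change of variables is needed; the main obstacle is purely bookkeeping of the Jacobian factor $r^{d-1}$ against the angular integral of $\rho^{(1)}_N$.
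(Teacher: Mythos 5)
Your proposal is correct and follows essentially the same route as the paper's own proof: pass to polar coordinates $x_2 = x_1 + r\omega$ in \eqref{eq:2et}, then substitute $m = M_{\rho^{(1)}_N}(x_1,r)$ with the Jacobian identity $r^{d-1}\dd r = \dd m \big/ \int_{\SS^{d-1}}\rho^{(1)}_N(x_1+R_{\rho^{(1)}_N}(x_1,m)\omega)\dd\omega$ to identify $\GG_{f^{(1)}_N}$ and hence $\FF^N_{f^{(1)}_N}$. The only difference is that you spell out the Jacobian computation and the diffeomorphism caveat slightly more explicitly than the paper does, which is harmless.
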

\begin{proof}
  Using polar coordinates $x_2=x_1+r\omega$, $r \in [0,\infty)$, $\omega \in \SS^{d-1}$ with $\int_{\SS^{d-1}} \dd \omega=1$, we can rewrite Equation~\eqref{eq:2et} as 
\begin{equation*}
\int\partial_t f^{(1)}_N(Z_1) \,\phi(Z_1)\dd Z_1=  \int\left[\phi(x_1, v_2)-\phi(Z_1)\right] \FF^N_{f^{(1)}_N}(x_1,v_2)f^{(1)}_N(\d Z_1) \dd v_2
  \end{equation*}
where, we temporarily define $\FF^N_\eta$, for a given function $(x,v)\mapsto \eta(x,v)$ by
\begin{equation*}
  \FF^N_{\eta}(x,v)= \lambda(N) \, (N-1)\int \eta(x+r\omega,v) \left(1- M_{\rho_\eta}(x,r)\right)^{N-2}r^{d-1}\dd r\dd \omega\;.
\end{equation*}
We perform the change of variable $m:=M_{\rho_\eta}(x,r)$, so that $r:=R_{\rho_\eta}(x,m)$.  It is then straightforward to see that $\FF^N_{\eta}$ is indeed defined by~\eqref{eq:mug} as the jacobian of the diffeomorphism is
\begin{equation*}
  r^{d-1}\dd r = \frac{\dd m}{\int_{\SS^{d-1}}\rho_\eta(x+R_{\rho_\eta}(x,m)\omega)\dd \omega}\;.
\end{equation*}
\end{proof}

\subsection{Limit equation}
\label{sec:limit}

\subsubsection{Preliminaries}
\label{subsubsec_prelim}

The passage to the limit with different scaling assumptions on $\lambda$ in the various considered models will use a fundamental lemma: Let $a$ and $b$ be two positive parameters. Define the Beta-distribution, $\beta_{a,b}$, the probability density function given for all $s \in [0,1]$ by 
\begin{equation*}
  \beta_{a,b}(s):=\frac{s^{a-1}\,(1-s)^{b-1}}{\BB(a,b)}\;,
\end{equation*}
where the Beta-function $\BB(a,b)$ is defined by
\begin{equation*}
\BB(a,b):=  \int_0^1 u^{a-1}\,(1-u)^{b-1}\dd u\;.
\end{equation*}
\begin{lemma}
\label{lem:2}
  Let $(h^N)_{N \in \NN}$ be a sequence of uniformly bounded smooth functions $[0,1] \to \RR$ and $(b_N)_{N \in \NN}$ be a sequence going to $\infty$. If $h^N(u)$ converges to 0 as $u$ goes to 0 uniformly in $N$ then the expected value of $h^N$ under the Beta-distribution of parameters $(a,b_N)$:
  \begin{equation*}
    \EE_{\beta_{a,b_N}}[h^N]=\frac{\int_0^1 h^N(u)\,u^{a-1}\,(1-u)^{b_N-1}\dd u}{\BB(a,b_N)}
  \end{equation*}
converges to 0 when $N$ goes to $\infty$.
\end{lemma}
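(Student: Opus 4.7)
\textbf{Proof plan for Lemma \ref{lem:2}.} The guiding intuition is that the Beta-distribution $\beta_{a,b_N}$ concentrates at $0$ as $b_N \to \infty$: its mean $a/(a+b_N)$ and variance both tend to $0$. Since $h^N$ is uniformly bounded and $h^N(u)\to 0$ as $u\to 0$ uniformly in $N$, the expected value should inherit this smallness. The plan is therefore a standard ``split-and-conquer'' argument at a small threshold $\delta$, combined with sharp asymptotics on $\BB(a,b_N)$ to control the tail mass.

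First, fix $\varepsilon>0$. By the uniform smallness hypothesis at $0$, I choose $\delta=\delta(\varepsilon)\in(0,1)$ such that $|h^N(u)|\leq \varepsilon$ for every $u\in[0,\delta]$ and every $N$. I then split
\begin{equation*}
\EE_{\beta_{a,b_N}}[h^N] \;=\; \int_0^{\delta} h^N(u)\,\beta_{a,b_N}(u)\dd u \;+\; \int_{\delta}^1 h^N(u)\,\beta_{a,b_N}(u)\dd u .
\end{equation*}
The first integral is controlled in absolute value by $\varepsilon$, since $\beta_{a,b_N}$ is a probability density. The second integral is bounded by $\|h^N\|_\infty$ times the Beta-mass of $[\delta,1]$, so everything reduces to showing
\begin{equation*}
I_N(\delta) \;:=\; \frac{1}{\BB(a,b_N)}\int_{\delta}^1 u^{a-1}(1-u)^{b_N-1}\dd u \;\longrightarrow\; 0 \quad \text{as } N\to\infty .
\end{equation*}

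For the numerator, I use the crude bound $(1-u)^{b_N-1}\leq (1-\delta)^{b_N-1}$ on $[\delta,1]$, which gives an upper bound decaying exponentially in $b_N$. For the denominator, I use the classical asymptotic $\BB(a,b_N)=\Gamma(a)\Gamma(b_N)/\Gamma(a+b_N)\sim \Gamma(a)\,b_N^{-a}$ as $b_N\to\infty$ (via Stirling), which decays only polynomially. The quotient therefore behaves like $b_N^{a}(1-\delta)^{b_N-1}$ up to a constant, and goes to $0$. Combining the two pieces yields $\limsup_N |\EE_{\beta_{a,b_N}}[h^N]|\leq \varepsilon$; letting $\varepsilon\to 0$ finishes the proof.

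The only mildly delicate step is the lower bound on $\BB(a,b_N)$, since one must be careful that the polynomial decay of the normalization constant does not swallow the exponential decay of the tail integral. This is ultimately harmless because exponential decay beats any polynomial, but it is where one has to be quantitative; everything else is bookkeeping and an application of the uniform smallness hypothesis on $h^N$ near $0$.
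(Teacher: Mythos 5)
Your proof is correct and follows essentially the same route as the paper's: an $\varepsilon$--$\delta$ split using the uniform smallness of $h^N$ near $0$, an exponential bound $(1-\delta)^{b_N-1}$ on the tail integral, and the asymptotic $\BB(a,b_N)\sim\Gamma(a)\,b_N^{-a}$ to conclude. The only (immaterial) difference is that you bound the tail integral via the pointwise estimate $(1-u)^{b_N-1}\leq(1-\delta)^{b_N-1}$ on $[\delta,1]$, whereas the paper uses the unimodality of $u\mapsto u^{a-1}(1-u)^{b_N-1}$ and the fact that its mode falls below $\delta$ for large $N$; both yield the same exponential decay.
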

\begin{proof}
As $h^N(u)$ converges to 0 as $u$ goes to 0 uniformly in $N$, we have
\begin{equation*}
  \forall \varepsilon >0\,, \exists \delta>0 \mbox{ such that } \forall u \in (0,\delta)\,,\forall N \in \NN\,, |h^N(u)|<\varepsilon\;.
\end{equation*}
Moreover, by assumption, there exists $C$ such that for all $u \in (0,1)$ and $N \in \NN$, $|h^N(u)| <C$. Therefore
\begin{equation}\label{eq:into}
   \EE_{\beta_{a,b_N}}[h^N] \le \varepsilon + C \frac{\int_\delta^1 u^{a-1}\,(1-u)^{b_N-1}\dd u}{\BB(a,b_N)}\;.
\end{equation}
Note that the fraction is closely related to the incomplete regularized Beta-distribution.
On one hand, the function $u \mapsto  u^{a-1}\,(1-u)^{b_N-1}$ is increasing on $[0,u_{a,b_N}]$ and decreasing on $[u_{a,b_N},1]$ where
\begin{equation*}
 u_{a,b_N}:= \frac{a-1}{a+b_N-2}\;,
\end{equation*}
is the maximum on $[0,1]$. As $(b_N)_{N \in \NN}$ goes to $\infty$, for $N$ large enough, $u_{a,b_N}$ is less than $\delta$. As a consequence, for $N$ large enough
\begin{equation}\label{eq:h1}
\int_\delta^1 u^{a-1}\,(1-u)^{b_N-1}\dd u \le \delta^{a-1} (1-\delta)^{b_N-1}\;.
\end{equation}
On the other hand, by standard properties of the Beta-distribution, see~\cite{Askey_Roy_Handbook10}, as $N$ goes to $\infty$ and $a$ is fixed 
\begin{equation}\label{eq:h2}
  \BB(a,b_N)=\int_0^1 u^{a-1}\,(1-u)^{b_N-1}\dd u \sim \Gamma \left(a \right)b_N^{-a}\;,
\end{equation}
where $\Gamma$ is the Gamma-function. 
Collecting~\eqref{eq:h1} and~\eqref{eq:h2} we obtain that there exists $C$ such that
\begin{equation*}
 \frac{\int_\delta^1 u^{a-1}\,(1-u)^{b_N-1}\dd u}{\BB(a,b_N)} \le  C (1-\delta)^{b_N-1}b_N^{a}\;.
\end{equation*}
This term converges to 0 for $N$ large. Coming back to~\eqref{eq:into} gives the stated result.
  \end{proof}

\subsubsection{Case $\lambda(N) = 1$}
\label{subsubsec_lambda=1}

In this section, we show that the kinetic model obtained with $\lambda(N) = 1$ is trivial, {\it i.e.} it involves no contribution of the particle interactions to the final dynamics.
\begin{proposition}[Case $\lambda(N) = 1$]
\label{prop:ff} 
Assume that $(f^{(1)}_N)_{N \in \NN}$ and $(\rho^{(1)}_N)_{N \in \NN}$ converge toward smooth functions $f$ and $\rho$ respectively. If the convergence of $(f^{(1)}_N)_{N \in\NN}$ to $f$ is such that the sequence $h^N=\GG_{f^{(1)}_N}(x_1,v_2,\cdot)-\GG_{f^{(1)}_N}(x_1,v_2,0)$ satisfies the assumptions of Lemma~\ref{lem:2} then we have
$$   
\int\partial_t f(Z_1) \,\phi(Z_1)\dd Z_1 =0\;.
$$
In strong form and after restoring the drift term, the equation for $f$ is the free transport equation:
\begin{equation}
\partial_t f + v \cdot \nabla_x f = 0. 
\label{eq:freetrasnsp}
\end{equation}
\end{proposition}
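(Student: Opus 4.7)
The plan is to start from the weak form~\eqref{eq:two} of Proposition~\ref{prop:two}, substitute $\lambda(N)=1$, and recognize the weight $(N-1)(1-m)^{N-2}$ appearing in~\eqref{eq:mug} as the Beta density $\beta_{1,N-1}(m)$ on $[0,1]$, since $\BB(1,N-1)=1/(N-1)$. Thus
$$
\FF^N_{f^{(1)}_N}(x_1,v_2) \;=\; \EE_{\beta_{1,N-1}}\bigl[\GG_{f^{(1)}_N}(x_1,v_2,\cdot)\bigr].
$$

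Next I would split $\GG_{f^{(1)}_N}(x_1,v_2,m) = \GG_{f^{(1)}_N}(x_1,v_2,0) + h^N(m)$. Since $R_{\rho}(x,0)=0$, the spherical integrals in the definition of $\GG$ trivialize and give
$$
\GG_{f^{(1)}_N}(x_1,v_2,0) \;=\; \frac{f^{(1)}_N(x_1,v_2)}{\rho^{(1)}_N(x_1)}.
$$
Applying Lemma~\ref{lem:2} with $a=1$ and $b_N=N-1\to\infty$ — which is exactly the situation covered by the hypothesis placed on $h^N$ — yields $\EE_{\beta_{1,N-1}}[h^N]\to 0$, so $\FF^N_{f^{(1)}_N}(x_1,v_2)\to f(x_1,v_2)/\rho(x_1)$.

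Passing to the limit under the smoothness assumption on the convergence $f^{(1)}_N\to f$, $\rho^{(1)}_N\to\rho$, the right-hand side of~\eqref{eq:two} becomes
$$
\int \bigl[\phi(x_1,v_2)-\phi(Z_1)\bigr]\,\frac{f(x_1,v_2)}{\rho(x_1)}\,f(\d Z_1)\,\d v_2.
$$
The key observation is that this integral is identically zero: in the gain term, integrating out $v_1$ using $\int f(x_1,v_1)\dd v_1=\rho(x_1)$ gives $\int\phi(x_1,v_2) f(x_1,v_2)\,\d x_1 \d v_2$; in the loss term, integrating out $v_2$ in the same fashion yields the same expression. The two contributions cancel exactly, which proves $\int\partial_t f(Z_1)\,\phi(Z_1)\,\d Z_1 = 0$. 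Restoring the drift term that was dropped in~\eqref{eq:me} gives the free transport equation~\eqref{eq:freetrasnsp}.

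The only real obstacle is bookkeeping: justifying the interchange of the limit $N\to\infty$ with the $v$- and $x$-integrations, which is absorbed into the qualitative phrase ``smooth functions $f$ and $\rho$'' together with the uniform convergence built into the $h^N$ hypothesis. Conceptually, the proposition says that the $\mathcal O(1)$ rate $\lambda(N)=1$ is too slow: typical interactions occur at the nearest-neighbor separation, which shrinks to zero, so partners carry the same velocity in the limit and the collision operator degenerates. This is precisely why one must rescale $\lambda(N)$ (as done in~\eqref{eq:lambda}) to extract the non-trivial diffusive contribution~\eqref{eq:collision_nn}.
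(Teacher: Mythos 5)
Your proposal is correct and follows essentially the same route as the paper: apply Lemma~\ref{lem:2} with $a=1$, $b_N=N-1$ to conclude $\FF^N_{f^{(1)}_N}\to f/\rho$, then observe that the resulting collision integral vanishes. The only cosmetic difference is that you verify the cancellation by integrating out $v_1$ and $v_2$ explicitly, whereas the paper invokes the antisymmetry of $\phi(x_1,v_2)-\phi(x_1,v_1)$ under the exchange $v_1\leftrightarrow v_2$ against the symmetric weight $f(x_1,v_1)f(x_1,v_2)/\rho(x_1)$; the two arguments are equivalent.
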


\begin{proof}
By Lemma~\ref{lem:2} applied to $a=1$, $b_N = N-1$, we have
\begin{equation*}
  \lim_{N \to \infty}\FF^N_{f^{(1)}_N}(x_1,v_2)=\GG_f(x_1,v_2,0)=\frac{f(x_1,v_2)}{\rho(x_1)}\;.
\end{equation*}
Therefore we have 
\begin{equation*}
   \lim_{N \to \infty}\int\partial_t f^{(1)}_N(Z_1) \;\phi(Z_1)\dd Z_1 = \int\left[\phi(x_1, v_2)-\phi(Z_1)\right] \frac{f(x_1,v_2)}{\rho(x_1)}f(\d Z_1) \dd v_2\;.
\end{equation*}
If we exchange $v_1$ and $v_2$, the term $\phi(x_1, v_2)-\phi(x_1, v_1)$ changes of sign while
\begin{equation*}
  f(x_1, v_1) \,\frac{f(x_1,v_2)}{\rho(x_1)}
\end{equation*}
remains unchanged, so that the integral vanishes. 
\end{proof}

\subsubsection{Non-trivial limit}
\label{subsubsec_nontrivial}

Here, we determine what must be the expression of the scaling factor $\lambda(N)$ such that the contribution of the particle interactions in the limit $N \to \infty$ is non-trivial and we determine the corresponding limit model. Before stating the main theorem, we need the following technical lemma.  

\begin{lemma}[Higher order expansion]
\label{lem:exp}
  Let $(x,v)\mapsto \eta(x,v)$ be a smooth function and $\rho_\eta(x):=\int \eta(x,v)\dd v$. Define 
\begin{equation}\label{eq:d}
  \DD[\rho_\eta,\eta](x,v)= \Delta_x \eta(x,v) - \frac{\eta(x,v)\,}{\rho_\eta(x) }\Delta_x \rho_\eta(x)\;.
\end{equation}
	For $m$ small enough, we have for all $(x,v)\in\RR^{2d}$:
\begin{equation}\label{eq:g}
  \GG_\eta(x,v,m)-\GG_\eta(x,v,0)= \frac{m^{\frac{2}{d}} d^{\frac{2}{d}-1}}{2}\frac{1}{\rho_\eta(x)^{1+\frac{2}{d}}} \DD[\rho_\eta,\eta](x,v) + o(m^{\frac{2}{d}})\;,
\end{equation}
where recall that
\begin{equation*}
  \GG_\eta(x,v,0)=\frac{\eta(x,v)}{\rho_\eta(x)}\;.
\end{equation*}
\end{lemma}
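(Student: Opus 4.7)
The plan is a two-stage asymptotic expansion: first invert the mass function $r\mapsto M_{\rho_\eta}(x,r)$ near $r=0$ to obtain an expansion of $R_{\rho_\eta}(x,m)$ in powers of $m^{1/d}$, then Taylor-expand the spherical averages in the numerator and denominator of $\GG_\eta$ about $R=0$ and form their quotient.

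To execute the first stage, I would write $M_{\rho_\eta}(x,r) = \int_0^r s^{d-1} \int_{\SS^{d-1}} \rho_\eta(x+s\omega)\dd\omega\dd s$. Using smoothness of $\rho_\eta$, the parity of $\omega \mapsto \omega$ on $\SS^{d-1}$ to kill the odd-order terms, and the normalization $\int_{\SS^{d-1}} \omega_i \omega_j \dd\omega = \delta_{ij}/d$, I get
\begin{equation*}
\int_{\SS^{d-1}} \rho_\eta(x+s\omega)\dd\omega = \rho_\eta(x) + \frac{s^2}{2d}\Delta_x \rho_\eta(x) + O(s^4),
\end{equation*}
whence
\begin{equation*}
M_{\rho_\eta}(x,r) = \frac{\rho_\eta(x)}{d}\, r^d + \frac{\Delta_x \rho_\eta(x)}{2d(d+2)}\, r^{d+2} + O(r^{d+4}).
\end{equation*}
Inverting this relation (the leading term is a pure $d$-th power so a standard power-series inversion applies) and keeping only the leading coefficient gives
\begin{equation*}
R_{\rho_\eta}(x,m) = \left(\tfrac{d}{\rho_\eta(x)}\right)^{1/d} m^{1/d} + O(m^{3/d}),
\end{equation*}
so in particular $R^2 = (d/\rho_\eta(x))^{2/d}\, m^{2/d} + O(m^{4/d})$, which is all that will be needed.

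For the second stage, the same sphere-symmetry Taylor expansion applied to $\eta(x+R\omega,v)$ and $\rho_\eta(x+R\omega)$ yields
\begin{align*}
\int_{\SS^{d-1}} \eta(x+R\omega,v)\dd\omega &= \eta(x,v) + \frac{R^2}{2d}\Delta_x \eta(x,v) + O(R^4),\\
\int_{\SS^{d-1}} \rho_\eta(x+R\omega)\dd\omega &= \rho_\eta(x) + \frac{R^2}{2d}\Delta_x \rho_\eta(x) + O(R^4).
\end{align*}
Dividing and expanding the geometric series of the denominator to first order in $R^2$ gives
\begin{equation*}
\GG_\eta(x,v,m) = \frac{\eta(x,v)}{\rho_\eta(x)} + \frac{R^2}{2d\,\rho_\eta(x)}\left(\Delta_x \eta(x,v) - \frac{\eta(x,v)}{\rho_\eta(x)}\Delta_x \rho_\eta(x)\right) + O(R^4),
\end{equation*}
i.e.\ $\GG_\eta(x,v,m) - \GG_\eta(x,v,0) = \frac{R^2}{2d\,\rho_\eta(x)}\,\DD[\rho_\eta,\eta](x,v) + O(R^4)$. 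Inserting the expansion of $R^2$ and simplifying $d^{2/d}/(2d) = d^{2/d-1}/2$ gives exactly \eqref{eq:g}, with the remainder $O(m^{4/d})$ absorbed into $o(m^{2/d})$.

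The only genuinely delicate point is ensuring that the $O(R^4)$ remainders in the two spherical averages really do combine to an error of order $m^{4/d}$ (hence $o(m^{2/d})$): this uses both the cancellation of odd-power spherical moments (so no $R^3$ correction appears) and the fact that the $m^{3/d}$ correction to $R$ contributes only at order $m^{4/d}$ when squared in the $R^2$ term of $\GG_\eta$. Everything else is bookkeeping; no smoothness beyond what is already assumed on $\eta$ and $\rho_\eta$ is required, and positivity of $\rho_\eta(x)$ in the domain of interest is needed to justify the inversion of the mass function near $r=0$.
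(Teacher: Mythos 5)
Your proposal is correct and follows essentially the same route as the paper: a second-order Taylor expansion of the spherical averages in $R_{\rho_\eta}$ (with the odd moments vanishing by symmetry and $\int_{\SS^{d-1}}\omega\otimes\omega\,\dd\omega=\mathrm{Id}/d$ producing the Laplacians), combined with the leading-order inversion $R_{\rho_\eta}(x,m)=(dm/\rho_\eta(x))^{1/d}+o(m^{1/d})$ of the partial-mass function. The only cosmetic difference is that you expand numerator and denominator separately before dividing, whereas the paper differentiates the quotient directly; your bookkeeping of the remainders is, if anything, slightly more careful.
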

\begin{proof}
We need to develop $m \mapsto \GG_\eta(x,v,m)$ for $m$ small to higher order terms. To do so we first compute the next two orders of $\GG_\eta$ expanded in powers of $R_{\rho_\eta}=R_{\rho_\eta}(x,m)$:
\begin{align*}
  \GG_\eta(x,v,m) &- \GG_\eta(x,v,0)=\\
&R_{\rho_\eta} \Bigg[\frac{\int (\omega \cdot \nabla_x\eta )(x+R_{\rho_\eta}\omega,v)\dd \omega}{\int \rho_\eta (x+R_{\rho_\eta}\omega)\dd \omega} \\
&\qquad -\frac{\int \eta (x+R_{\rho_\eta}\omega,v)\dd \omega}{\left( \int \rho_\eta (x+R_{\rho_\eta}\omega)\dd \omega\right)^2} \int \omega \cdot \nabla_x\rho_\eta  (x+R_{\rho_\eta}\omega)\dd \omega \Bigg]\\
&\qquad+ \frac{R_{\rho_\eta}^2}{2}  \Bigg[\frac{\int (\omega \cdot \nabla_x )^2\eta(x+R_{\rho_\eta}\omega,v)\dd \omega}{\int \rho_\eta (x+R_{\rho_\eta}\omega)\dd \omega}  \\
&\qquad -2\frac{\int (\omega \cdot \nabla_x \eta) (x+R_{\rho_\eta}\omega,v)\dd \omega}{\left( \int \rho_\eta (x+R_{\rho_\eta}\omega)\dd \omega\right)^2} \int \omega \cdot \nabla_x \rho_\eta  (x+R_{\rho_\eta}\omega)\dd \omega \\
&\qquad +2\frac{\int \eta (x+R_{\rho_\eta}\omega,v)\dd \omega}{\left( \int \rho_\eta (x+R_{\rho_\eta}\omega)\dd \omega\right)^3} \left(\int \omega \cdot \nabla_x \rho_\eta  (x+R_{\rho_\eta}\omega)\dd \omega\right)^2 \\
&\qquad -\frac{\int \eta (x+R_{\rho_\eta}\omega,v)\dd \omega}{\left( \int \rho_\eta (x+R_{\rho_\eta}\omega)\dd \omega\right)^2} \left(\int (\omega \cdot \nabla_x)^2\rho_\eta  (x+R_{\rho_\eta}\omega)\dd \omega\right)^2 \Bigg]
\end{align*}
Fortunately all the first order terms and the second and third terms in the expression of the second order terms are zero by anti-symmetry. Now, since $\int_{\SS^{d-1}}\omega \otimes \omega\dd \omega={\rm Id}/d$, we obtain when $R_{\rho_\eta} \to 0$:
\begin{equation}\label{eq:g2}
  \GG_\eta(x,v,m) - \GG_\eta(x,v,0)=\frac{R_{\rho_\eta}^2}{2\,d} \left[\frac{\Delta \eta(x,v)}{\rho_\eta(x)} - \frac{\eta(x,v)\Delta_x \rho_\eta(x)}{\rho_\eta(x)^2 }\right] + o\left(R_{\rho_\eta}^2 \right)\;.
\end{equation}
Let us now develop $m \mapsto R_{\rho_\eta}(x,m)$ in powers of $m$. We have for $r$ small enough
\begin{equation*}
  M(x,r):=\int_{|x-x'|<r}\rho_\eta(x')\dd x' = \rho_\eta(x)\, \frac{r^{d}}{d}+o(r^{d})\;.
\end{equation*}
Therefore for $m$ small enough
\begin{equation*}
  R_{\rho_\eta} = \left(\frac{d\,m}{\rho_\eta(x)} \right)^{\frac{1}{d}} + o(m^{\frac{1}{d}})\;,
\end{equation*}
which concludes the proof by inserting this expression in~\eqref{eq:g2}.
\end{proof}

Before stating the theorem, we introduce the following notations: define 
\begin{equation}\label{eq:hN}
h^N(m)=H^N(m)-H^N(0), 
\end{equation}
 where
\begin{equation}\label{eq:hn}
  H^N(m):=\frac{\GG_{f^{(1)}_N}(x_1,v_2,m) - \GG_{f^{(1)}_N}(x_1,v_2,0)}{m^{\frac{2}{d}}}\;,
\end{equation}
and
\begin{equation}\label{eq:h0}
  H^N(0):= \frac{d^{\frac{2}{d}-1}}{2}\frac{1}{\rho^{(1)}_N(x_1)^{\frac{2}{d}+1}} \DD[\rho^{(1)}_N,f^{(1)}_N](x_1,v_2)\;.
\end{equation}
We note that from Lemma \ref{lem:exp}, $h^N(m) \to 0$ as $m \to 0$ for all integer $N$. We assume in the theorem below that this convergence is uniform with respect to $N$. This assumption prevents the occurrence of particle concentrations in the limit $N \to \infty$ which would lead to a non-smooth behavior of $\GG_f(x_1,v_2,m)$ at $m=0$. Now we state the main theorem of this paper: 

\begin{theorem}
Assume that $(f^{(1)}_N)_{N \in \NN}$ and $(\rho^{(1)}_N)_{N \in \NN}$ converge toward smooth functions $f$ and $\rho$ respectively. Assume that the convergence of $(f^{(1)}_N)_{N \in\NN}$ to $f$ is such that the sequence $h^N$ defined by \eqref{eq:hN} satisfies the assumptions of Lemma~\ref{lem:2}. Set $\lambda(N)$ to the value
\begin{equation}\label{eq:lambda}
\lambda(N) =   \frac{1}{(N-1) \frac{d^{\frac{2}{d}-1}}{2}\,\BB(1+\frac{2}{d},N-1)}.
\end{equation}
Then for all test functions $\phi(Z_1)$ we have
\begin{align}
\nonumber
\int\partial_t f(Z_1) &\;\phi(Z_1)\dd Z_1\\
& \hspace{-1cm}= \int\phi(Z_1)\frac{1}{\rho(x_1)^{\frac{2}{d}}}    \left[\Delta_x f(Z_1) - \frac{f(Z_1)}{\rho(x_1) }\Delta_x \rho(x_1)\right]\dd Z_1\;.
\label{eq:kin_nontriv_weak}
  \end{align}
In strong form, introducing the collision operator
\begin{equation}\label{eq:q}
  Q[f]:= \frac{1}{\rho^{\frac{2}{d}}}    \left[\Delta_x f - \frac{f}{\rho }\Delta_x \rho\right]\;,
\end{equation}
and after restoring the drift term, the equation for $f$ reads: 
\begin{equation}\label{eq:kin_nontriv_strong}
\partial_t f + v \cdot \nabla_x f = Q[f]. 
\end{equation}
\label{th:non-trivial_case}
\end{theorem}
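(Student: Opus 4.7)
The plan is to pass to the limit directly in the weak formulation \eqref{eq:two} of Proposition~\ref{prop:two}, by splitting $\FF^N_{f^{(1)}_N}$ into the ``zeroth-order'' part, whose contribution vanishes by symmetry, and the ``correction'' part, whose leading contribution, suitably rescaled, becomes the diffusion operator $Q[f]$. The scaling $\lambda(N)$ given by \eqref{eq:lambda} is precisely what is needed to compensate the $\BB(1+\tfrac{2}{d},N-1)$ factor that arises.

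First I would decompose, using \eqref{eq:mug},
\begin{equation*}
\FF^N_\eta(x,v) = \lambda(N)\,\frac{\eta(x,v)}{\rho_\eta(x)} + \lambda(N)(N-1)\int_0^1\bigl[\GG_\eta(x,v,m)-\GG_\eta(x,v,0)\bigr]\,(1-m)^{N-2}\dd m,
\end{equation*}
where I used $\lambda(N)(N-1)\int_0^1(1-m)^{N-2}\dd m = \lambda(N)$. Inserting the first summand into \eqref{eq:two} gives an expression proportional to $f^{(1)}_N(x_1,v_1)f^{(1)}_N(x_1,v_2)/\rho^{(1)}_N(x_1)$, which is symmetric in $(v_1,v_2)$ against the antisymmetric factor $\phi(x_1,v_2)-\phi(x_1,v_1)$; exactly as in the proof of Proposition~\ref{prop:ff}, this contribution vanishes identically (not merely in the limit). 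This is crucial because $\lambda(N)\to\infty$ with the scaling~\eqref{eq:lambda}, so only the antisymmetry prevents a blow-up.

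For the remaining integral, I would invoke Lemma~\ref{lem:exp} and write $\GG_\eta(x,v,m)-\GG_\eta(x,v,0) = m^{2/d}\bigl(H^N(0)+h^N(m)\bigr)$ with $H^N(0)$ and $h^N$ as in \eqref{eq:h0}--\eqref{eq:hN}. Then
\begin{equation*}
\lambda(N)(N-1)\int_0^1 m^{2/d}\,H^N(0)\,(1-m)^{N-2}\dd m = \lambda(N)(N-1)\BB\bigl(1+\tfrac{2}{d},N-1\bigr)\,H^N(0),
\end{equation*}
and the choice~\eqref{eq:lambda} makes $\lambda(N)(N-1)\BB(1+\tfrac{2}{d},N-1) = 2/d^{2/d-1}$, which multiplied by the prefactor in~\eqref{eq:h0} yields exactly $\rho^{(1)}_N(x_1)^{-(1+2/d)}\DD[\rho^{(1)}_N,f^{(1)}_N](x_1,v_2)$. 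The remaining piece, namely
\begin{equation*}
\lambda(N)(N-1)\BB\bigl(1+\tfrac{2}{d},N-1\bigr)\,\EE_{\beta_{1+2/d,N-1}}[h^N],
\end{equation*}
has a bounded prefactor, while the hypothesis on $h^N$ together with Lemma~\ref{lem:2} (applied with $a=1+2/d$, $b_N=N-1$) forces the Beta-expectation to $0$. Therefore $\FF^N_{f^{(1)}_N}(x_1,v_2)-\lambda(N)\,f^{(1)}_N(x_1,v_2)/\rho^{(1)}_N(x_1)$ converges to $\rho(x_1)^{-(1+2/d)}\DD[\rho,f](x_1,v_2)$.

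It remains to close the weak equation. Passing to the limit in \eqref{eq:two}, after discarding the symmetric piece, gives
\begin{equation*}
\int\partial_t f(Z_1)\,\phi(Z_1)\dd Z_1 = \int[\phi(x_1,v_2)-\phi(x_1,v_1)]\,\frac{\DD[\rho,f](x_1,v_2)}{\rho(x_1)^{1+2/d}}\,f(x_1,v_1)\dd x_1\dd v_1\dd v_2.
\end{equation*}
Expanding the bracket and integrating $v_2$ (resp.\ $v_1$) in the first (resp.\ second) term, one uses the key observation that $\int\DD[\rho,f](x_1,v_2)\dd v_2 = \Delta_x\rho - \Delta_x\rho = 0$, so the second term disappears and the first reduces to $\int\phi(Z_1)\,\rho(x_1)^{-2/d}\DD[\rho,f](Z_1)\dd Z_1 = \int\phi\,Q[f]\dd Z_1$, which is~\eqref{eq:kin_nontriv_weak}. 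Restoring the drift term then yields~\eqref{eq:kin_nontriv_strong}.

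The main obstacles are bookkeeping-of-asymptotics in the choice of $\lambda(N)$ and the conceptual point that the antidiffusion term in $Q[f]$ arises automatically from mass conservation $\int \DD[\rho,f]\dd v = 0$; the potential technical subtlety is that $\lambda(N) \to \infty$, so one must be careful that the diverging leading term really is killed exactly by the $v_1\leftrightarrow v_2$ symmetry rather than merely controlled by it.
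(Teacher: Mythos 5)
Your proposal is correct and follows essentially the same route as the paper: the zeroth-order term is killed exactly by the $v_1\leftrightarrow v_2$ antisymmetry, the expansion of Lemma~\ref{lem:exp} combined with Lemma~\ref{lem:2} (your Beta-expectation step is precisely the content of Proposition~\ref{prop:sec}) identifies the limit of the rescaled correction, and the choice~\eqref{eq:lambda} normalizes the prefactor to one. The only cosmetic difference is in the final bookkeeping: the paper first discards the symmetric second term of $f(Z_1)\,\DD[\rho,f](x_1,v_2)$ and recovers the antidiffusion from the $-\phi(x_1,v_1)$ contribution of the $\Delta_x f$ term, whereas you split on $\phi(x_1,v_2)-\phi(x_1,v_1)$ and use $\int \DD[\rho,f]\dd v_2=0$; both regroupings yield~\eqref{eq:kin_nontriv_weak}.
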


Before proving this theorem, we first prove the following intermediate step: 
\begin{proposition}\label{prop:sec} Under the assumptions of Theorem \ref{th:non-trivial_case}, for all test functions $\phi(Z_1)$, we have
  \begin{multline}
    \lim_{N \to \infty}\int_0^1 \left[\GG_{f^{(1)}_N}(x_1,v_2,m) - \GG_{f^{(1)}_N}(x_1,v_2,0)\right]\frac{(1-m)^{N-2}}{\BB(1+\frac{2}{d},N-1)}\dd m\\
		=\frac{d^{\frac{2}{d}-1}}{2}\frac{1}{\rho(x_1)^{\frac{2}{d}+1}} \DD[\rho,f](x_1,v_2). 
  	\label{eq:limGG-GG}
\end{multline}
\end{proposition}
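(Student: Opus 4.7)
The plan is to recognize the integral in \eqref{eq:limGG-GG} as an expectation with respect to a Beta distribution, and then apply Lemma~\ref{lem:2} to the part that does not contribute in the limit.

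First, I would note that with the choice $a = 1 + \tfrac{2}{d}$ and $b_N = N-1$, the Beta density reads
\[
\beta_{1+\frac{2}{d},\, N-1}(m) \;=\; \frac{m^{\frac{2}{d}}\,(1-m)^{N-2}}{\BB\bigl(1+\frac{2}{d},\,N-1\bigr)}.
\]
Using the definition \eqref{eq:hn}, we have the identity $\GG_{f^{(1)}_N}(x_1,v_2,m) - \GG_{f^{(1)}_N}(x_1,v_2,0) = m^{\frac{2}{d}} H^N(m)$, valid for $m>0$ and extended continuously to $m=0$ by the expansion in Lemma~\ref{lem:exp} together with the definition \eqref{eq:h0} of $H^N(0)$. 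Consequently, the left-hand side of \eqref{eq:limGG-GG} is precisely $\EE_{\beta_{1+2/d,\,N-1}}[H^N]$.

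Second, I would decompose $H^N(m) = H^N(0) + h^N(m)$ following \eqref{eq:hN}, so that
\[
\EE_{\beta_{1+\frac{2}{d},\,N-1}}[H^N] \;=\; H^N(0) \;+\; \EE_{\beta_{1+\frac{2}{d},\,N-1}}[h^N].
\]
The second term vanishes as $N\to\infty$: by hypothesis the sequence $(h^N)$ satisfies the assumptions of Lemma~\ref{lem:2} (uniformly bounded, smooth, and $h^N(u)\to 0$ uniformly in $N$ as $u\to 0$), and since $b_N = N-1 \to \infty$, Lemma~\ref{lem:2} applied with $a = 1+\tfrac{2}{d}$ gives $\EE_{\beta_{1+2/d,\,N-1}}[h^N] \to 0$.

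For the remaining term, I would pass to the limit directly in the explicit formula \eqref{eq:h0}. The smooth convergence $f^{(1)}_N \to f$ and $\rho^{(1)}_N \to \rho$ ensures that $\DD[\rho^{(1)}_N, f^{(1)}_N](x_1,v_2) \to \DD[\rho,f](x_1,v_2)$ and $\rho^{(1)}_N(x_1)^{-(2/d+1)} \to \rho(x_1)^{-(2/d+1)}$, whence
\[
\lim_{N\to\infty} H^N(0) \;=\; \frac{d^{\frac{2}{d}-1}}{2}\,\frac{1}{\rho(x_1)^{\frac{2}{d}+1}}\,\DD[\rho,f](x_1,v_2),
\]
which is exactly the right-hand side of \eqref{eq:limGG-GG}. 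The substantive step here is the identification of the correct Beta distribution and the correct normalization in the definitions of $H^N$ and $h^N$; once this bookkeeping is in place, Lemma~\ref{lem:2} does the analytic work and the rest follows from the assumed smooth convergence. No real obstacle is anticipated, since the delicate estimate (localization of the Beta mass near zero) has already been established in Lemma~\ref{lem:2}.
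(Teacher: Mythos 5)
Your proof is correct and follows essentially the same route as the paper's: identify the integral as $\EE_{\beta_{1+2/d,\,N-1}}[H^N]$ via \eqref{eq:hn}, split off $h^N$ using \eqref{eq:hN}, kill the remainder with Lemma~\ref{lem:2} applied to $a=1+\tfrac{2}{d}$, $b_N=N-1$, and pass to the limit in $H^N(0)$ using the smooth convergence of $f^{(1)}_N$ and $\rho^{(1)}_N$. If anything, your write-up makes explicit the final step $H^N(0)\to H(0)$ that the paper's proof leaves implicit.
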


\begin{proof}[Proof of Proposition~\ref{prop:sec}]
Thanks to the assumptions on $h^N$, we can apply Lemma~\ref{lem:2} with $a = 1 + \frac{2}{d}$ and $b_N = N-1$, and obtain
\begin{equation*}
  \lim_{N \to \infty} \int_0^1h^N(m)\dd \beta_{\frac{2}{d}-1,N-1} =0\;.
\end{equation*}
As $\beta_{\frac{2}{d}-1,N-1}$ is a probability this proves that
\begin{equation}
\label{eq:limHN}
  \lim_{N \to \infty} \int_0^1H^N(m)\dd \beta_{1+\frac{2}{d},N-1}=H(0)\;,
\end{equation}
where $H(0)$ is defined from (\ref{eq:h0}) by replacing $f^{(1)}_N$, $\rho^{(1)}_N$ by $f$, $\rho$ respectively.

Using \eqref{eq:hn}, we have
\begin{multline} \label{eq:limHN2}
\int_0^1H^N(m)\dd \beta_{1+\frac{2}{d},N-1}\\=\int_0^1 \left[\GG_{f^{(1)}_N}(x_1,v_2,m) - \GG_{f^{(1)}_N}(x_1,v_2,0)\right]\frac{(1-m)^{N-2}}{\BB(1+\frac{2}{d},N-1)}\dd m\;.
\end{multline}
Inserting \eqref{eq:limHN2} and the expression of $H(0)$ given by \eqref{eq:h0} into \eqref{eq:limHN} leads to \eqref{eq:limGG-GG}. \end{proof}

\begin{proof}[Proof of Theorem \ref{th:non-trivial_case}]
We start from \eqref{eq:two} where $\FF^N$ is defined by~\eqref{eq:mug}. Using \eqref{eq:limGG-GG} and owing to the fact that the term involving $\GG_{f^{(1)}_N}(x_1,v_2,0)$ vanishes upon integration with respect to $(Z_1,v_2)$ (exactly like in the proof of Proposition \ref{prop:ff}), we deduce that, for $N$ large, up to negligible terms when $N \to \infty$, we have
\begin{align*}
 \int\partial_t f(Z_1)& \;\phi(Z_1)\dd Z_1\\
&= \lambda(N) \, (N-1)\,  \frac{d^{\frac{2}{d}-1}}{2}\,\BB(1+\frac{2}{d},N-1)\\
&\int\left[\phi(x_1, v_2)-\phi(x_1, v_1)\right] \frac{f(Z_1)}{\rho(x_1)^{\frac{2}{d}+1}} \DD[\rho,f](x_1,v_2)\dd Z_1\dd v_2. 
  \end{align*}
Using Definition~\eqref{eq:d} of $\DD$, we see that the second term of $f(Z_1) \, \DD[\rho,f](x_1,v_2)$ is symmetric by exchange of $(v_1,v_2)$. Since $\left[\phi(x_1, v_2)-\phi(x_1, v_1)\right]$ is anti-symmetric under this exchange, this second term vanishes after integration. Therefore, the only non-zero term comes from the first term of $\DD$, which leads to (up to negligible terms when $N \to \infty$) to
\begin{align*}
 \int\partial_t f(Z_1) &\;\phi(Z_1)\dd Z_1\\
&= \lambda(N) \, (N-1)\,  \frac{d^{\frac{2}{d}-1}}{2}\,\BB(1+\frac{2}{d},N-1)\\
&\int\phi(Z_1)\frac{1}{\rho(x_1)^{\frac{2}{d}}}    \left[\Delta_x f(Z_1) - \frac{f(Z_1)}{\rho(x_1) }\Delta_x \rho(x_1)\right]\dd Z_1\;,
  \end{align*}
and with the choice \eqref{eq:lambda}, this leads to \eqref{eq:kin_nontriv_weak}. It readily seen, applying Green's formula, that the strong form of the equation is \eqref{eq:kin_nontriv_strong}, upon inserting back the drift term. \end{proof} 

\begin{remark}
We note that $\lambda(N) \to \infty$ as $N \to \infty$. In particular, in the case $d=2$, $\lambda(N)$ can be easily computed and has value  $\lambda(N)= 2N$. This means that the limit is non-trivial only if the number of collisions per unit time is larger than for the standard kinetic scale (which corresponds to $\lambda(N)=1$). 
\label{rem:lambda}
\end{remark}

\section{$K$-nearest neighbor dynamics}
\label{sec:2}

\subsection{Master equation} 
\label{subsec:master}

Let $K$ be a fixed integer in $\{1,\ldots,N\}$ and a sequence $(\alpha_k)_{k \in \{1,\ldots,K}$ such that $\sum_{k=1}^K\alpha_k=1$ being given. We extend this sequence for $k$ in $\{1,\ldots,N\}$ by taking $\alpha_k=0$ for all $k>K$. The particle $i$ will adopt the velocity of its $j$th-nearest neighbor with a probability $\pi_{ij}^N$ given by \eqref{eq:piKnn}. Section~\ref{sec:1} was devoted to the case $\alpha_1=1$ and $\alpha_j \equiv 0$ for all $j>1$. 

As shown in~\cite{Blanchet_Degond_JSP16}, when the collisions occur at Poisson times with rate $\lambda(N) N$, the master equation of the $N$-particle distribution function $f^{(N)}$ in weak form is, for all test function $\phi^N:\zz \mapsto \phi^N(\zz)$:
\begin{multline}\label{eq:me2}
\int\partial_t f^{(N)}(\zz) \;\phi^N(\zz)\dd \zz = \sum_{i=1}^N \int f^{(N)}(\zz) \,(v_i \cdot \nabla_{x_i})\phi^N(\zz)\dd \zz  \\
+ \lambda(N) \sum_{i=1}^N \sum_{\substack{j=1\\j\not=i}}^N \iint \sum_{k=1}^K \alpha_k \delta_{k}[r^N(i,j)(\xx)]\big[\phi^N(Z_1, \ldots, x_i, v_j , \ldots, x_j,v_j, \ldots Z_N)\\
- \phi^N(\zz)\big] f^{(N)}(\zz) \dd \zz \;.
\end{multline}

\subsection{Propagation of chaos}
\label{subsec_propchaosKn}

\begin{proposition}[First marginal equation with propagation of chaos]
Under the propagation of chaos assumption~\eqref{eq:assumptionf} we have for all test function $\phi$
\begin{multline}\label{eq:2et2}
 \frac{1}{\lambda(N) \, (N-1)} \int\partial_t f^{(1)}_N(Z_1) \;\phi(Z_1)\dd Z_1=  \sum_{k=1}^K \alpha_k{{N-2} \choose {k-1}}\\\int\left[\phi(x_1, v_2)-\phi(Z_1)\right] M_{\rho^{(1)}_N}(x_1,|x_2-x_1|)^{k-1}\left(1- M_{\rho^{(1)}_N}(x_1,|x_2-x_1|)\right)^{N-k-1}\\f^{(1)}_N(\d Z_1)   f^{(1)}_N(\d Z_2)\;.
  \end{multline}
\end{proposition}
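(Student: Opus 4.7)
The proof closely follows the strategy used in the derivation of \eqref{eq:2et} for the nearest-neighbor case; the only genuine modification is that the binomial probability of having rank exactly $1$ must be replaced by the probability of having rank exactly $k$, then weighted and summed over $k$. The plan is as follows. I start from the master equation \eqref{eq:me2} with the test function $\phi^N(\zz)=\phi(Z_1)$ and drop the drift term (handled by classical methods). In the collision integral, a collision between particles $i$ and $j$ alters $v_i$ into $v_j$ but leaves every $Z_\ell$ with $\ell\neq i$ unchanged; hence the bracket $[\phi(\ldots)-\phi(Z_1)]$ vanishes identically as soon as $i\neq 1$. Combining the surviving $i=1$ terms with the permutation invariance \eqref{eq:permpi} of $f^{(N)}$ and Remark~\ref{lem:rank}(ii) on the rank, the $N-1$ summands $j=2,\ldots,N$ all integrate to the same quantity, so the double sum collapses to $(N-1)$ times the $(i,j)=(1,2)$ term.

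Next I invoke the propagation of chaos assumption \eqref{eq:assumptionf} to factor $f^{(N)}(\zz)\approx\prod_{\ell=1}^N f^{(1)}_N(Z_\ell)$ up to negligible remainders. Since neither the test-function bracket nor the rank indicator $\delta_k[r^N(1,2)(\xx)]$ depends on $v_3,\ldots,v_N$, these velocities integrate out against the marginals, leaving
\begin{equation*}
\sum_{k=1}^K \alpha_k \int [\phi(x_1,v_2)-\phi(Z_1)] \Bigl\{ \int \delta_k[r^N(1,2)(\xx)] \prod_{\ell=3}^N \dd \rho^{(1)}_N(x_\ell) \Bigr\} f^{(1)}_N(\d Z_1)\, f^{(1)}_N(\d Z_2).
\end{equation*}

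The heart of the proof is the combinatorial evaluation of the inner integral, which is precisely the probability that, when $x_3,\ldots,x_N$ are drawn i.i.d. from $\rho^{(1)}_N$, the fixed point $x_2$ has rank exactly $k$ relative to $x_1$, i.e.\ that exactly $k-1$ of the $N-2$ drawn points lie in the ball $B(x_1,|x_2-x_1|)$. Setting $p:=M_{\rho^{(1)}_N}(x_1,|x_2-x_1|)$, each position lies in that ball with probability $p$ independently of the others, so the binomial law gives $\binom{N-2}{k-1}p^{k-1}(1-p)^{N-k-1}$. This is the direct analogue of the $(1-p)^{N-2}$ formula obtained for $k=1$ in the proof of \eqref{eq:2et} and already exploited in~\cite{Blanchet_Degond_JSP16}. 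Dividing both sides by $\lambda(N)(N-1)$ and summing over $k$ then yields \eqref{eq:2et2}. No genuine obstacle arises beyond the $K=1$ case; the only bookkeeping is to check non-negativity of the exponents $k-1$ and $N-k-1$ (which requires $K\leq N-1$, implicit in the setup) and to observe that the convention $\alpha_k=0$ for $k>K$ makes the upper summation limit immaterial, as needed for consistency with the later identification of the scaling $\lambda(N)$ in \eqref{eq:lambdaKn}.
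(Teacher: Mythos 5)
Your proposal is correct and follows essentially the same route as the paper: reduction of the master equation to the $(i,j)=(1,2)$ term via permutation invariance, factorization by propagation of chaos, and the combinatorial identification of $\int \delta_k[r^N(1,2)(\xx)]\prod_{\ell=3}^N \dd\rho^{(1)}_N(x_\ell)$ with the binomial probability $\binom{N-2}{k-1}p^{k-1}(1-p)^{N-k-1}$ where $p=M_{\rho^{(1)}_N}(x_1,|x_2-x_1|)$. The paper's own proof is terser (it defers the reduction step to the $K=1$ case with ``as before''), but the content is identical, and your added remarks on the exponents and the convention $\alpha_k=0$ for $k>K$ are harmless bookkeeping.
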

\begin{proof} As before we have 
  \begin{align*}
\frac{1}{\lambda(N) \, (N-1)} & \int\partial_t f^{(N)}(\zz) \;\phi^N(Z_1)\dd \zz\\&\quad=  \sum_{k=1}^K\alpha_k\int\sum_{k=1}^K\delta_{k}[r^N(1,2)(\xx)]\,\left[\phi(x_1, v_2)-\phi(Z_1)\right] f^{(N)}(\zz) \dd \zz \;,
  \end{align*}
which leads, by the propagation of chaos assumption~\eqref{eq:assumptionf} to
 \begin{multline}\label{eq:1}
 \frac{1}{\lambda(N) \, (N-1)} \int\partial_t f^{(1)}_N(\zz) \;\phi(Z_1)\dd \zz\notag\\
=  \sum_{k=1}^K\alpha_k\int\left[\phi(x_1, v_2)-\phi(Z_1)\right]   \left\{\int\delta_{k}[r^N(1,2)(\xx)]\prod_{\ell=3}^N \dd \rho^{(1)}_N(x_\ell)  \right\}\notag\\
f^{(1)}_N(\d Z_1)  f^{(1)}_N(\d Z_2) \;,
  \end{multline}
	up to negligible terms as $N \to \infty$. 
The computation of the integral
\begin{equation*}
  \int\delta_{k}(r^N(1,2)(\xx))\,\prod_{\ell=3}^N \dd \rho^{(1)}_N(x_\ell) 
\end{equation*}
is slightly different from before as here (see \cite{Blanchet_Degond_JSP16} for details)
\begin{align*}
  \EE_{\rho^{(1)}_N} \left[\delta_{k}(r^N(1,2)(\xx))\right] &= \sum_{R=1}^{N-1}\delta_{k}\left({R}\right) {{N-2} \choose {R-1}}\, p^{R-1}\,(1-p)^{N-2-(R-1)}\\
&= {{N-2} \choose {k-1}}\, p^{k-1}\,(1-p)^{N-k-1}, 
\end{align*}
with
\begin{equation*}
  p:=M_{\rho^{(1)}_N}(x_1,|x_2-x_1|),
\end{equation*}
which gives the stated result.
\end{proof}
The probability considered in Section~\ref{sec:1} was the Bernoulli distribution. A similar result can be proved with a more general binomial distribution: For a given smooth function $\rho$ and $x \in \RR^d$, let $m\in[0,1) \mapsto R_{\rho}(x,m) \in [0,\infty)$ be the inverse function of $r \mapsto M_{\rho}(x,r)$. We also define the binomial distribution with parameters $n \in \NN$ and $p \in [0,1]$, denoted $\MM(n,p)$, as the probability on the discrete set $\{0, 1, \ldots, N \}$ given by the probability mass function:
\begin{equation*}
 \mu(k;n,p)={\mathbb P}(X=k)= {n \choose k}\,p^k\,(1-p)^{n-k}\;.
\end{equation*}
We note that, by integration by parts, we have 
\begin{equation}
\label{eq:intmu}
\chg{(n+1)} \int_0^1 \mu(k;n,p) \, \dd p  = 1. 
\end{equation}

\begin{proposition} \label{prop:two2}
We have
\begin{multline*}
 \int\partial_t f^{(1)}_N(Z_1) \;\phi(Z_1)\dd Z_1\\
=  \int\left[\phi(x_1, v_2)-\phi(Z_1)\right] \FF_{f^{(1)}_N,K}^N (x_1,v_2)f^{(1)}_N(\d Z_1) \dd v_2
  \end{multline*}
where for given smooth functions $(x,v)\mapsto \eta(x,v)$ and $\rho_\eta(x)=\int \eta(x,v)\dd v$, we define
\begin{equation}
\label{eq:mug2}
  \FF^N_{\eta,K}(x,v):=\lambda(N) \,\int_0^1 \GG_\eta(x,v,m)\,  \nu_K(m)\dd m\;,
\end{equation}
with
\begin{equation*}
  \nu_K(m):= (N-1) \sum_{k=1}^K \alpha_k\mu(k-1;N-2,m)
\end{equation*}
and
\begin{equation*}
  \GG_\eta(x,v,m) = \frac{\int_{\SS^{d-1}}\eta(x+R_{\rho_\eta}(x,m)\,\omega,v)\dd \omega}{\int_{\SS^{d-1}}\rho_\eta(x+R_{\rho_\eta}(x,m)\,\omega)\dd \omega}\;.
\end{equation*}
\end{proposition}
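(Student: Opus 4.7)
The plan is to follow the proof of Proposition~\ref{prop:two} line by line, the only change being that the Bernoulli weight $(1-m)^{N-2}$ is replaced by an $\alpha_k$-weighted binomial sum. First, starting from~\eqref{eq:2et2}, I switch to spherical coordinates centred at $x_1$ by writing $x_2 = x_1 + r\omega$ with $r \geq 0$ and $\omega \in \SS^{d-1}$. Since the summand in~\eqref{eq:2et2} depends on $x_2$ only through $|x_2-x_1|=r$ in the partial mass and through $x_2$ itself in $f^{(1)}_N(\!\dd Z_2)$, the right-hand side rearranges into
\begin{equation*}
\int\left[\phi(x_1,v_2)-\phi(Z_1)\right]\widetilde{\FF}^N(x_1,v_2)\,f^{(1)}_N(\!\dd Z_1)\dd v_2,
\end{equation*}
where
\begin{equation*}
\widetilde{\FF}^N(x,v) = \lambda(N)(N-1)\int_0^\infty\!\!\int_{\SS^{d-1}} f^{(1)}_N(x+r\omega,v)\,w^N(x,r)\,r^{d-1}\dd r\dd\omega,
\end{equation*}
and $w^N(x,r) = \sum_{k=1}^K \alpha_k \binom{N-2}{k-1} M_{\rho^{(1)}_N}(x,r)^{k-1}\bigl(1-M_{\rho^{(1)}_N}(x,r)\bigr)^{N-k-1}$.

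Next, I perform the change of variable $m=M_{\rho^{(1)}_N}(x,r)$, so that $r=R_{\rho^{(1)}_N}(x,m)$. Exactly as in the proof of Proposition~\ref{prop:two}, the Jacobian is
\begin{equation*}
r^{d-1}\dd r = \frac{\dd m}{\int_{\SS^{d-1}} \rho^{(1)}_N(x + R_{\rho^{(1)}_N}(x,m)\omega)\dd\omega},
\end{equation*}
and after this substitution the weight $w^N(x,r)$ transforms into $\sum_{k=1}^K \alpha_k \binom{N-2}{k-1} m^{k-1}(1-m)^{N-k-1} = \sum_{k=1}^K \alpha_k\,\mu(k-1;N-2,m)$, which is precisely $\nu_K(m)/(N-1)$. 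Combining this with the angular ratio coming from the numerator $f^{(1)}_N(x+r\omega,v)$ and the Jacobian denominator produces exactly the function $\GG_{f^{(1)}_N}(x,v,m)$ defined in the statement, and one recovers the formula~\eqref{eq:mug2} for $\FF^N_{f^{(1)}_N,K}$.

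There is really no substantive obstacle here: the argument is structurally identical to that of Proposition~\ref{prop:two} and reduces to bookkeeping. The combinatorial identity $\binom{N-2}{k-1}p^{k-1}(1-p)^{N-k-1}=\mu(k-1;N-2,p)$ is immediate, and the factor $(N-1)$ that used to sit outside in~\eqref{eq:mug} is simply absorbed into the definition of $\nu_K$, which is why~\eqref{eq:mug2} no longer displays it explicitly. The only sanity check worth performing is the consistency with the $K=1$, $\alpha_1=1$ case: then $\nu_1(m) = (N-1)\mu(0;N-2,m) = (N-1)(1-m)^{N-2}$, and~\eqref{eq:mug2} collapses to~\eqref{eq:mug}, as expected.
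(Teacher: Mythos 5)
Your argument is correct and is exactly what the paper intends: its own ``proof'' of Proposition~\ref{prop:two2} is the single sentence that the proof is identical to that of Proposition~\ref{prop:two}, and your write-up carries out precisely that adaptation (polar coordinates, the change of variable $m=M_{\rho^{(1)}_N}(x,r)$ with the same Jacobian, and the identification of the $\alpha_k$-weighted binomial sum with $\nu_K(m)/(N-1)$). The consistency check against the $K=1$ case is a nice touch but does not change the fact that this is the same route as the paper.
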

The proof is identical and will not be repeated here. Also note that $\nu_K(m) , \dd m$ is a probability on [0,1]. Indeed, thanks to \eqref{eq:intmu}, we have
  \begin{align*}
  \int_0^1 \nu_K(m) \, \dd m &=(N-1) \sum_{k=1}^K \alpha_k \int \mu(k-1;N-2,m) \dd m = \sum_{k=1}^K \alpha_k =1\;.
\end{align*}

\subsection{Limit equation}
\label{subsec_limiteqKn}

\subsubsection{Case $\lambda(N)=1$}
\label{subsubsec_lambda1Kn}

\begin{proposition}[Case $\lambda(N)=1$]\label{prop:ff2} Assume that $(f^{(1)}_N)_{N \in \NN}$ and $(\rho^{(1)}_N)_{N \in \NN}$ converge toward smooth functions $f$ and $\rho$ respectively. If the convergence of $(f^{(1)}_N)_{n \in\NN}$ to $f$ is such that the sequence $h^N=\GG_{f^{(1)}_N}(x_1,v_2,\cdot)-\GG_{f^{(1)}_N}(x_1,v_2,0)$ satisfies the assumptions of Lemma~\ref{lem:2} then for all test functions $\phi$ we have
  \begin{equation*}
   \int\partial_t f(Z_1) \,\phi(Z_1)\dd Z_1 =0\;.
\end{equation*}
In strong form and after restoring the drift term, the equation for $f$ is the free transport equation \eqref{eq:freetrasnsp}. 
\end{proposition}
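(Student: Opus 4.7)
The plan is to mimic the argument used in Proposition \ref{prop:ff} for the one-nearest-neighbor case, with the extra step of decomposing the weight $\nu_K(m)\,\d m$ into a convex combination of Beta distributions. Starting from Proposition \ref{prop:two2}, I would set $\lambda(N)=1$ and aim to show that $\FF^N_{f^{(1)}_N,K}(x_1,v_2) \to f(x_1,v_2)/\rho(x_1) = \GG_f(x_1,v_2,0)$ as $N\to\infty$. Once this is established, the limit of \eqref{eq:two} (adapted to the $K$-nearest case) takes the form
\begin{equation*}
\int\bigl[\phi(x_1,v_2)-\phi(Z_1)\bigr]\,\frac{f(x_1,v_2)}{\rho(x_1)}\,f(\d Z_1)\,\d v_2 = 0,
\end{equation*}
which vanishes by the same anti-symmetry argument (swap $v_1 \leftrightarrow v_2$) as in Proposition \ref{prop:ff}.

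First I would rewrite the binomial weights as Beta densities. A direct computation gives
\begin{equation*}
(N-1)\,\mu(k-1;N-2,m) = \frac{(N-1)!}{(k-1)!(N-k-1)!}\,m^{k-1}(1-m)^{N-k-1} = \beta_{k,N-k}(m),
\end{equation*}
so that
\begin{equation*}
\nu_K(m) = \sum_{k=1}^K \alpha_k\,\beta_{k,N-k}(m),
\end{equation*}
exhibiting $\nu_K(m)\,\d m$ as a convex combination (with weights $\alpha_k$) of the Beta probability measures $\beta_{k,N-k}$.

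Next I would use the assumption that $h^N = \GG_{f^{(1)}_N}(x_1,v_2,\cdot) - \GG_{f^{(1)}_N}(x_1,v_2,0)$ is uniformly bounded and tends to $0$ at the origin uniformly in $N$, and apply Lemma \ref{lem:2} separately for each $k\in\{1,\ldots,K\}$ with $a = k$ (fixed) and $b_N = N-k \to \infty$. This gives $\EE_{\beta_{k,N-k}}[h^N]\to 0$ for every such $k$, and since $K$ is finite and $\sum_k\alpha_k=1$, summing yields
\begin{equation*}
\int_0^1 h^N(m)\,\nu_K(m)\,\d m = \sum_{k=1}^K \alpha_k \int_0^1 h^N(m)\,\beta_{k,N-k}(m)\,\d m \xrightarrow[N\to\infty]{} 0.
\end{equation*}
Combined with $\int_0^1 \nu_K(m)\,\d m = 1$ (already noted in the paper) and the convergence $\GG_{f^{(1)}_N}(x_1,v_2,0) = f^{(1)}_N(x_1,v_2)/\rho^{(1)}_N(x_1) \to f(x_1,v_2)/\rho(x_1)$, this proves the desired convergence of $\FF^N_{f^{(1)}_N,K}$.

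The only mildly delicate point is that, unlike the one-nearest-neighbor case where a single Beta distribution appears, we now need Lemma \ref{lem:2} to work for each index $k=1,\ldots,K$; but since $K$ is fixed and independent of $N$, the parameter $a=k$ stays bounded while $b_N = N-k$ still tends to infinity, so Lemma \ref{lem:2} applies verbatim to each term and the sum is finite. The final conclusion in strong form (the free transport equation \eqref{eq:freetrasnsp}) then follows by restoring the drift term, exactly as in Proposition \ref{prop:ff}.
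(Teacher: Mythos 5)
Your proof is correct and follows essentially the same route as the paper, which simply notes that $\nu_K$ is a convex combination of binomial (Beta) densities and applies Lemma~\ref{lem:2} to each term; your identity $(N-1)\,\mu(k-1;N-2,m)=\beta_{k,N-k}(m)$ and the term-by-term application with $a=k$, $b_N=N-k$ are exactly the details the paper leaves implicit.
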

\begin{proof}
  As $\nu_K(m)$ is a convex linear combination of binomial distributions, we can apply Lemma~\ref{lem:2} to $\mu(k-1;N-2,m)$ for any $k$ and this leads to the result. 
\end{proof}

\subsubsection{Non-trivial case}
\label{subsubsec_nontrivKn}

Like in section \ref{subsubsec_nontrivial}, we now determine the scaling factor $\lambda(N)$ for the particle interactions to have a non-trivial contribution in the limit $N \to \infty$ and we determine the corresponding limit model.

\begin{theorem}
Assume that $(f^{(1)}_N)_{N \in \NN}$ and $(\rho^{(1)}_N)_{N \in \NN}$ converge toward smooth functions $f$ and $\rho$ respectively. Assume that the convergence of $(f^{(1)}_N)_{N \in\NN}$ to $f$ is such that the sequence $h^N$ defined by \eqref{eq:hN} satisfies the assumptions of Lemma~\ref{lem:2}. Set $\lambda(N)$ to the value
\begin{equation}\label{eq:lambdaKn}
\lambda(N) =   \frac{1}{(N-1) \frac{d^{\frac{2}{d}-1}}{2}\,\sum_{k=1}^K \alpha_k\, {N-2 \choose k-1} \, \BB(k+\frac{2}{d},N-k)}.
\end{equation}
Then for all test functions $\phi(Z_1)$, $f$ satisfies Eq. \eqref{eq:kin_nontriv_weak}.  In strong form and after restoring the drift term, the equation for $f$ is given by \eqref{eq:kin_nontriv_strong}
\label{th:non-trivial_case_Kn}
\end{theorem}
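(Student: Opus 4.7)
The strategy is to repeat the argument leading to Theorem \ref{th:non-trivial_case}, starting this time from the weak formulation given by Proposition \ref{prop:two2}, and to handle each of the $K$ binomial terms in $\nu_K$ separately via Lemma \ref{lem:2}. The main conceptual point is that, just as $(N-1)(1-m)^{N-2}\dd m$ was (up to the factor $\BB(1+\tfrac{2}{d},N-1)$) a Beta density of parameters $(1,N-1)$ shifted to $(1+\tfrac{2}{d},N-1)$ after multiplication by $m^{2/d}$, so the density $\binom{N-2}{k-1} m^{k-1}(1-m)^{N-k-1}$ becomes, after multiplication by $m^{2/d}$ and renormalization, the Beta density $\beta_{k+\frac{2}{d},\,N-k}$. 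Since $N-k\to\infty$ for each fixed $k\in\{1,\ldots,K\}$, Lemma \ref{lem:2} applies to each binomial separately.

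Concretely, I would first split $\GG_{f^{(1)}_N}(x_1,v_2,m)=\GG_{f^{(1)}_N}(x_1,v_2,0)+[\GG_{f^{(1)}_N}(x_1,v_2,m)-\GG_{f^{(1)}_N}(x_1,v_2,0)]$ in \eqref{eq:mug2}. Since $\int_0^1\nu_K(m)\dd m=1$, the contribution of $\GG_{f^{(1)}_N}(x_1,v_2,0)=f^{(1)}_N(x_1,v_2)/\rho^{(1)}_N(x_1)$ to \eqref{eq:two} vanishes after integration in $(v_1,v_2)$ by the same exchange argument as in the proof of Proposition \ref{prop:ff}. Then, using \eqref{eq:hn}, for each $k$ I would write
\begin{align*}
&\int_0^1 \bigl[\GG_{f^{(1)}_N}(x_1,v_2,m)-\GG_{f^{(1)}_N}(x_1,v_2,0)\bigr]\,\mu(k-1;N-2,m)\,\dd m \\
&\qquad = \binom{N-2}{k-1}\,\BB\bigl(k+\tfrac{2}{d},N-k\bigr)\int_0^1 H^N(m)\,\beta_{k+\frac{2}{d},\,N-k}(m)\,\dd m,
\end{align*}
so that the assumption on $h^N=H^N-H^N(0)$ combined with Lemma \ref{lem:2} (applied with $a=k+\tfrac{2}{d}$, $b_N=N-k$) yields
\begin{equation*}
\int_0^1 H^N(m)\,\beta_{k+\frac{2}{d},\,N-k}(m)\,\dd m \longrightarrow H(0) = \frac{d^{\frac{2}{d}-1}}{2\,\rho(x_1)^{\frac{2}{d}+1}}\,\DD[\rho,f](x_1,v_2).
\end{equation*}

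Summing over $k$ with weights $\alpha_k$, multiplying by $\lambda(N)(N-1)$ and inserting the choice \eqref{eq:lambdaKn} of $\lambda(N)$, the prefactors collapse exactly to $1/\bigl(\tfrac{d^{2/d-1}}{2}\bigr)$ times $H(0)$, giving in the limit
\begin{equation*}
\int\partial_t f(Z_1)\,\phi(Z_1)\dd Z_1 = \int\bigl[\phi(x_1,v_2)-\phi(x_1,v_1)\bigr]\,\frac{f(Z_1)}{\rho(x_1)^{\frac{2}{d}+1}}\,\DD[\rho,f](x_1,v_2)\,\dd Z_1\,\dd v_2.
\end{equation*}
The symmetrization argument in the proof of Theorem \ref{th:non-trivial_case} then applies verbatim: the $(f\,\Delta_x\rho/\rho)$-contribution of $\DD$ is symmetric in $(v_1,v_2)$ while the test-function difference is antisymmetric, so it vanishes, leaving precisely the weak form \eqref{eq:kin_nontriv_weak}. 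Restoring the drift term and applying Green's formula yields the strong form \eqref{eq:kin_nontriv_strong}.

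The only step requiring real care, and the one I view as the main obstacle, is to check that Lemma \ref{lem:2} is actually applicable \emph{uniformly} across the $K$ binomials: for every fixed $k\in\{1,\ldots,K\}$ we need $b_N=N-k\to\infty$ (trivially true) and the \emph{same} sequence $h^N$ to still satisfy the hypothesis of Lemma \ref{lem:2} when tested against $\beta_{k+\frac{2}{d},\,N-k}$ rather than $\beta_{1+\frac{2}{d},\,N-1}$. Inspection of the proof of Lemma \ref{lem:2} shows that only the behavior of $h^N(u)$ near $u=0$ and its uniform boundedness enter, both of which are exactly the content of the hypothesis stated in the theorem. Everything else is bookkeeping of the $\lambda(N)$ normalization, which is precisely why \eqref{eq:lambdaKn} has the form it has.
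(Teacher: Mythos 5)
Your proposal is correct and follows essentially the same route as the paper: the paper reduces the theorem to Proposition \ref{prop:sec2} (the analog of Proposition \ref{prop:sec}), whose proof it leaves to the reader, and your term-by-term treatment of the $K$ binomials --- recognizing $m^{2/d}\mu(k-1;N-2,m)\dd m$ as $\binom{N-2}{k-1}\BB(k+\tfrac{2}{d},N-k)\,\beta_{k+\frac{2}{d},N-k}(m)\dd m$ and applying Lemma \ref{lem:2} with $a=k+\tfrac{2}{d}$, $b_N=N-k$ --- is exactly the intended argument, with the final symmetrization step carried over verbatim from Theorem \ref{th:non-trivial_case}. Your observation that the hypothesis on $h^N$ is independent of the Beta parameters, so the same sequence works for all $k\le K$, is the correct resolution of the only point the paper glosses over.
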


The result still relies on Lemma~\ref{lem:exp} and the following analog of Proposition \ref{prop:sec}: 

\begin{proposition}
\label{prop:sec2} Under the assumptions of Theorem \ref{th:non-trivial_case_Kn}, for all test functions $\phi(Z_1)$, we have
  \begin{multline*}
    \lim_{N \to \infty}\int_0^1 \left[\GG_{f^{(1)}_N}(x_1,v_2,m) - \GG_{f^{(1)}_N}(x_1,v_2,0)\right] \\
	 \times	\frac{\sum_{k=1}^K \alpha_k \, {N-2 \choose k-1} \, m^{k-1} (1-m)^{N-k-1}}{\sum_{k=1}^K \alpha_k \, {N-2 \choose k-1} \, \BB(k+\frac{2}{d},N-k)}\dd m\\=\frac{d^{\frac{2}{d}-1}}{2}\frac{1}{\rho(x_1)^{\frac{2}{d}+1}} \DD[\rho,f](x_1,v_2)\;.
  \end{multline*}
\end{proposition}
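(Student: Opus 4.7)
My approach is to adapt the proof of Proposition~\ref{prop:sec} almost verbatim, exploiting the fact that the kernel $\nu_K(m)\,\dd m$ of Proposition~\ref{prop:two2} is a convex combination of $K$ binomial masses $\mu(k-1;N-2,m)$, rather than the single one ($k=1$) used in Section~\ref{sec:1}. The proof will rest on Lemma~\ref{lem:exp} (for the two-term expansion of $\GG$) and on $K$ separate applications of Lemma~\ref{lem:2}, one per index $k \in \{1,\ldots,K\}$.

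First I would invoke Lemma~\ref{lem:exp} to write
\[
\GG_{f^{(1)}_N}(x_1,v_2,m) - \GG_{f^{(1)}_N}(x_1,v_2,0) = m^{\frac{2}{d}}\bigl(H^N(0) + h^N(m)\bigr),
\]
with $H^N(0)$ and $h^N$ as in~\eqref{eq:h0} and~\eqref{eq:hN}. Multiplying the summand $m^{k-1}(1-m)^{N-k-1}$ in the numerator by the resulting factor $m^{\frac{2}{d}}$ converts it into $m^{k-1+\frac{2}{d}}(1-m)^{N-k-1}$, which, once normalized by $\BB(k+\frac{2}{d},N-k)$, is precisely the density of the Beta law $\beta_{k+\frac{2}{d},N-k}$. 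The integral appearing on the left-hand side of the claim therefore rewrites as
\[
\sum_{k=1}^K w_k^N\,\EE_{\beta_{k+\frac{2}{d},N-k}}\!\bigl[H^N(0) + h^N\bigr], \qquad w_k^N := \frac{\alpha_k {N-2 \choose k-1}\BB(k+\frac{2}{d},N-k)}{\sum_{j=1}^K \alpha_j {N-2 \choose j-1}\BB(j+\frac{2}{d},N-j)},
\]
and the weights $w_k^N$ are non-negative with $\sum_{k=1}^K w_k^N = 1$.

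Then, for each fixed $k \in \{1,\ldots,K\}$, I would apply Lemma~\ref{lem:2} with $a = k+\frac{2}{d}$ and $b_N = N-k \to \infty$. Since the hypothesis imposed on $h^N$ is identical to the one already used in Proposition~\ref{prop:sec}, it yields $\EE_{\beta_{k+\frac{2}{d},N-k}}[h^N] \to 0$ for every such $k$; and because $\beta_{k+\frac{2}{d},N-k}$ is a probability and $H^N(0)$ is independent of $m$, $\EE_{\beta_{k+\frac{2}{d},N-k}}[H^N(0)] = H^N(0)$, which tends to $\frac{d^{\frac{2}{d}-1}}{2}\rho(x_1)^{-\frac{2}{d}-1}\DD[\rho,f](x_1,v_2)$ by continuity of $\DD$ together with the assumed convergences $f^{(1)}_N \to f$ and $\rho^{(1)}_N \to \rho$. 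Since $K$ is finite, the convergence is uniform in $k$, so the convex combination collapses to the common limit and gives the announced identity.

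The only real obstacle I foresee is a bookkeeping one: one must verify that the denominator chosen on the left-hand side of the claim is exactly the normalization produced by absorbing the extra factor $m^{\frac{2}{d}}$ into each binomial mass, so that the resulting mixture weights $w_k^N$ sum to one. This is precisely the algebraic identification that dictates the scaling $\lambda(N)$ in~\eqref{eq:lambdaKn}, and once carried out the remainder of the argument is a finite-sum version of the nearest-neighbor case, with no new analytical difficulty.
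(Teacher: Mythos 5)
Your proof is correct and follows exactly the route the paper intends: the paper leaves this proof to the reader as ``similar to that of Proposition~\ref{prop:sec}'', and your argument is precisely that adaptation --- rewriting the mixture of binomial masses as a convex combination of Beta laws $\beta_{k+\frac{2}{d},N-k}$ and applying Lemma~\ref{lem:2} with $a=k+\frac{2}{d}$, $b_N=N-k$ for each of the finitely many $k$. The bookkeeping identification of the normalizing denominator with the mixture weights $w_k^N$ is the only point requiring care, and you have carried it out correctly.
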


The proof Proposition \ref{prop:sec2} is similar to that of Proposition \ref{prop:sec} and is left to the reader. The deduction of Theorem \ref{th:non-trivial_case_Kn} from Proposition \ref{prop:sec2} is exactly the same as that of Theorem \ref{th:non-trivial_case} from Proposition \ref{prop:sec}. 

\begin{remark}
We note the following property which can be proven by integration by parts: for all $k \in {\mathbb Z}$, for all $a \in [-k,\infty)$, for all $b \in [k,\infty)$, we have: 
$$ \BB(a+k, b-k) = \BB(a,b) \, {a+k-1 \choose k} \, {b-1 \choose k}^{-1}. $$
Thanks to this property, we have for $k \in {\mathbb N}$, $k \geq 2$: 
$$ {N-2 \choose k-1} \, \BB(k+\frac{2}{d},N-k)  = {\frac{2}{d}+k-1 \choose k-1} \, \BB(1+\frac{2}{d},N-1). $$
Thus, 
$$ \lambda(N) =  \frac{1}{\sum_{k=1}^K {\frac{2}{d}+k-1 \choose k-1} \, \alpha_k} \, \, \,  \frac{1}{(N-1) \frac{d^{\frac{2}{d}-1}}{2}\, \BB(1+\frac{2}{d},N-1)}.
$$
Therefore, we find the same scaling of $\lambda(N)$ as in the nearest neighbor interaction case, up to the multiplication by the constant factor $(\sum_{k=1}^K {\frac{2}{d}+k-1 \choose k-1} \, \alpha_k)^{-1} $. 
\label{rem:orderofmag}
\end{remark}

\section{From smooth to nearest-neighbor interaction}
\label{sec_rbtonearest}

In this section we investigate the connection between the smooth rank-based interaction developed in~\cite{Blanchet_Degond_JSP16} and the nearest neighbor (or $K$ nearest-neighbor) interaction considered here. We show that when the kernel $K(m)$ concentrates near $m=0$, we pass from \eqref{eq:kinsrbd} to \eqref{eq:kinetic_nn}. As the kernel $K$ concentrates, it needs to be rescaled in the appropriate way. More precisely, we introduce a rescaling parameter $\varepsilon$ which we will tend to zero and the following rescaled kernel
\begin{equation}
\label{eq:rescK}
  K^\varepsilon(m)=\frac{1}{\varepsilon^{1+\frac{2}{d}}} \, K^0\Big(\frac{m}{\varepsilon}\Big)\;,
\end{equation}
and we assume that $K_0$ is normalized such that 
\begin{equation}
\label{eq:normK0}
\frac{d^{\frac{2}{d}-1}}2\int m^{\frac{2}{d}}\,K^0(m)\dd m = 1. 
\end{equation}
\begin{proposition}
\label{prop:rbtonn}
  The kinetic nearest-neighbor interaction model \eqref{eq:kinetic_nn} is the limit when $\varepsilon$ goes to 0 of the kinetic smooth rank-based interaction model \eqref{eq:kinsrbd} with interaction kernel $K=K^\varepsilon$ given by \eqref{eq:rescK} with normalization given by \eqref{eq:normK0}.
\end{proposition}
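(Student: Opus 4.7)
The plan is to rewrite the collision operator of \eqref{eq:kinsrbd} in the same ``radial'' coordinates as those used in the proof of Proposition~\ref{prop:two}, and then directly expand the integrand using the local behavior near $m=0$ provided by Lemma~\ref{lem:exp}. Specifically, starting from the right-hand side of \eqref{eq:kinsrbd} with $K=K^\varepsilon$, I write $x'=x+r\omega$, $r\in[0,\infty)$, $\omega\in\SS^{d-1}$, and perform the change of variable $m=M_\rho(x,r)$, $r=R_\rho(x,m)$, exactly as in the proof of Proposition~\ref{prop:two}. This produces
\begin{equation*}
\rho(x)\int f(x',v)\,K^\varepsilon\bigl(M_\rho(x,|x'-x|)\bigr)\dd x' \;=\; \rho(x)\int_0^1 K^\varepsilon(m)\,\GG_f(x,v,m)\dd m.
\end{equation*}

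The next step is to absorb the loss term $-f(x,v)$ into the integral. Since $\GG_f(x,v,0)=f(x,v)/\rho(x)$ and the original kernel $K$ satisfies $\int_0^1 K=1$, the rank-based collision operator always takes the manifestly mass-conserving form
\begin{equation*}
\rho(x)\int_0^1 K(m)\bigl[\GG_f(x,v,m)-\GG_f(x,v,0)\bigr]\dd m,
\end{equation*}
and this is the object one should track when passing to the concentrated limit, since it extends unambiguously to the non-normalized kernels $K^\varepsilon$. I then change variable $m=\varepsilon u$ in the integral and use the explicit rescaling \eqref{eq:rescK} to obtain
\begin{equation*}
\rho(x)\int_0^{1/\varepsilon} \varepsilon^{-\frac{2}{d}}\,K^0(u)\bigl[\GG_f(x,v,\varepsilon u)-\GG_f(x,v,0)\bigr]\dd u.
\end{equation*}

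At this point I invoke Lemma~\ref{lem:exp} which, after multiplication by $\varepsilon^{-2/d}$, gives pointwise in $u$
\begin{equation*}
\varepsilon^{-\frac{2}{d}}\bigl[\GG_f(x,v,\varepsilon u)-\GG_f(x,v,0)\bigr] \;\longrightarrow\; \frac{u^{\frac{2}{d}} d^{\frac{2}{d}-1}}{2}\,\frac{1}{\rho(x)^{1+\frac{2}{d}}}\,\DD[\rho,f](x,v).
\end{equation*}
Passing to the limit under the integral (see below for the delicate point), using the normalization \eqref{eq:normK0}, and recognizing that $\rho(x)\cdot\rho(x)^{-1-2/d}=\rho(x)^{-2/d}$, I recover exactly the collision operator $Q[f]=\rho^{-2/d}\DD[\rho,f]$ of \eqref{eq:collision_nn}. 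This yields \eqref{eq:kinetic_nn} in the limit $\varepsilon\to 0$.

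The main obstacle is the justification of the dominated-convergence argument in the last step: the domain of $u$-integration $[0,1/\varepsilon]$ is unbounded in $\varepsilon$, the remainder in Lemma~\ref{lem:exp} is only $o(m^{2/d})$ for $m$ small, and the pre-factor $\varepsilon^{-2/d}$ amplifies any tail contribution. Controlling it requires either an assumption that $K^0$ is compactly supported near $0$, or a quantitative tail decay of $K^0$ together with a global bound on $\GG_f(x,v,m)$ in terms of $\|f\|_\infty$ and a lower bound on $\rho$. A secondary issue is that the rescaled kernel does not satisfy $\int K^\varepsilon=1$; I would address this by taking as a definition of the ``rank-based collision operator with kernel $K^\varepsilon$'' the mass-preserving formulation $\rho\int K^\varepsilon[\GG_f-\GG_f(0)]\dd m$ displayed above, which coincides with the right-hand side of \eqref{eq:kinsrbd} whenever $\int K=1$ and is the natural object to concentrate.
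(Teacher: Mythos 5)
Your proof follows essentially the same route as the paper's: polar coordinates in $x'$, the change of variables $m=M_\rho(x,r)$, the rescaling $m=\varepsilon u$, and the expansion of Lemma~\ref{lem:exp}, with the normalization \eqref{eq:normK0} producing the coefficient $1$ in front of $Q[f]$. The only difference is presentational --- you cancel the divergent $O(\varepsilon^{-2/d})$ leading term by subtracting $\GG_f(x,v,0)$ in the strong, manifestly mass-conserving form, whereas the paper kills the same term via the antisymmetry of $[\phi(Z_1)-\phi(x_1,v_2)]\,\GG_f(x_1,v_2,0)\,f(Z_1)$ under $v_1\leftrightarrow v_2$ in the weak form --- and the technical caveats you flag (the non-normalization of $K^\varepsilon$ and the domination needed on the growing domain $[0,1/\varepsilon]$) are genuine but are likewise left implicit in the paper's argument.
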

\begin{proof}
The weak form of the kinetic smooth rank-based interaction model \eqref{eq:kinsrbd} obtained in~\cite{Blanchet_Degond_JSP16} is given, for any test function $\phi(Z_1)$ by
\begin{multline*}
  \int \partial_t f(Z_1,t) \phi(Z_1)\dd Z_1 = \int [\phi(Z_1)-\phi(x_1,v_2)] \\ K^\varepsilon \left(M_\rho(x_1,|x_2-x_1|)\right)f(\dd Z_1)\,f(\dd Z_2)\;. 
\end{multline*}
Passing to polar coordinates in $x_2$, i.e. writing $x_2 = x_1 + r \omega$, $r=|x_2-x_1|$, $\omega = \frac{x_2-x_1}{|x_2-x_1|}$, this equation reads
\begin{multline*}
  \int \partial_t f(Z_1,t) \phi(Z_1)\dd Z_1 = \int [\phi(Z_1)-\phi(x_1,v_2)]  f(x_1+r\,\omega,v_2) \\
	K^\varepsilon \left(M_\rho(x_1,r)\right)f(\dd Z_1)r^{d-1}\dd r\dd \omega\;. 
\end{multline*}
Using the change of coordinates $m=M_\rho(x_1,r)$ the inverse function of which is $r=R_\rho(x_1,m)$ and the jacobian of which is
\begin{equation*}
  \dd m=\rho(x_1+r\omega)r^{d-1}\dd r, 
\end{equation*}
we obtain
\begin{multline*}
  \int \partial_t f(Z_1,t) \phi(Z_1)\dd Z_1 = \int [\phi(Z_1)-\phi(x_1,v_2)]  \GG(x_1,v_2,m)\\K^\varepsilon \left(m\right)f(\dd Z_1)\dd m\dd \omega\dd v_2\;,
\end{multline*}
where
\begin{equation*}
  \GG(x,v,m):=\frac{f(x+R_\rho(x,m)\,\omega,v)}{\rho(x_1+R_\rho(x_1,m)\,\omega)}\;.
\end{equation*}
Let us look at the limit of
$
  \int K^\varepsilon(m)\GG(x,v,m)\dd m
$
when $\varepsilon$ goes to 0. By Lemma~\ref{lem:exp}, we have for small $\varepsilon$
\begin{align*}
  \int K^\varepsilon(m)\GG(x,v,m)\dd m&=\frac{1}{\varepsilon^{1+\frac{2}{d}}} \int K^0\Big(\frac{m}{\varepsilon}\Big) \,\GG(m)\dd m =\frac{1}{\varepsilon^{\frac{2}{d}}} \int K^0(m)\,\GG(\varepsilon m)\dd m\\
&=\frac{1}{\varepsilon^{\frac{2}{d}}} \int K^0(m)\left[\GG(x,v,0)+ \frac{\varepsilon^{\frac{2}{d}}}2\,m^{\frac{2}{d}}\,\GG^{(2)}(x,v,0)+o(\varepsilon^{\frac{2}{d}})\right]\dd m
\end{align*}
where by~\eqref{eq:g}
\begin{equation*}
  \GG^{(2)}(x,v,0):=\frac{d^{\frac{2}{d}-1}}{2}\frac1{\rho(x)^{\frac{2}{d}+1}} \DD[\rho,f](x,v)\;.
\end{equation*}
As the first term is anti-symmetric in the transform $(v_1,v_2) \to (v_2,v_1)$, we obtain at the limit $\varepsilon \to 0$ and thanks again to Lemma~\ref{lem:exp}:
\begin{multline*}
  \int \partial_t f(Z_1,t) \phi(Z_1)\dd Z_1 \\
	= \int m^{\frac{2}{d}}\,K^0(m)\dd m \, \,  \int [\phi(Z_1)-\phi(x_1,v_2)]  \GG^{(2)}(x_1,v_2,0) f(\dd Z_1)\dd v_2 \\
	= \left( \frac{d^{\frac{2}{d}-1}}2\int m^{\frac{2}{d}}\,K^0(m)\dd m \right)   \, \,  \int [\phi(Z_1)-\phi(x_1,v_2)] \frac{1}{\rho(x_1)^{\frac{2}{d}+1}} \hspace{0.9cm} \mbox{}\\ \DD[\rho,f] (x_1,v_2) f(\dd Z_1)\dd v_2\;. 
\end{multline*}
The passage to the strong form is identical to that of Section \ref{subsubsec_nontrivial} and we  obtain
\begin{equation*}
  \partial_t f = \left( \frac{d^{\frac{2}{d}-1}}2\int m^{\frac{2}{d}}\,K^0(m)\dd m \right) Q[f], 
\end{equation*}
where $Q$ is defined by~\eqref{eq:q}. This concludes the proof assuming that $K_0$ is normalized according to  \eqref{eq:normK0}. 
\end{proof}

\section{Conclusion}
\label{conclusion}

In this paper, we have put forward a particle interaction model where particles interact with their nearest neighbor. We have shown that the large particle limit under the Propagation of Chaos assumption is a spatial diffusion for the particle distribution function corrected by an anti-diffusion term acting on the spatial density. We have shown that the appearance of this anti-diffusion term depending on the spatial density results from the fact that the interactions are mass-preserving. We have also considered a model in which particles interact with their $K$ nearest neighbors, for a fixed value of $K$, showing that the corresponding kinetic model is the same as in the nearest neighbor interaction case. Finally, we have linked this work with the previous article \cite{Blanchet_Degond_JSP16} where smooth rank-based dynamics were considered and shown that the kinetic nearest-neighbor model can be recovered from the former through a singular limit involving a scaling of the interaction kernel. 
    
The kinetic models obtained here, as in \cite{Blanchet_Degond_JSP16}, are novel. Their mathematical theory is entirely open: proving existence and uniqueness of solutions, investigating large-time behavior, equilibria and other qualitative properties of the solutions will require the establishment of an appropriate mathematical framework. In parallel, more elaborate physical interaction models (such as the Cucker-Smale \cite{Cucker_Smale_IEEE07} or Motsch-Tadmor \cite{Motsch_Tadmor_JCP11} models) should also be considered. One important question is to investigate how the present results are robust to the introduction of noise in the interaction dynamics. Indeed, noise play an important part of many flocking models in nature. Finally, adequate numerical methods for the kinetic models must be developed and the assessment of the kinetic models against the particle ones in realistic situations should be carefully documented so that these models can be used in practice.

\medskip
\noindent
{\bf Acknowledgements:} PD acknowledges support by the Engineering and Physical Sciences 
Research Council (EPSRC) under grants no. EP/M006883/1, EP/N014529/1 and EP/P013651/1, by the Royal Society and the Wolfson Foundation through a Royal Society Wolfson Research Merit Award no. WM130048 and by the National Science Foundation (NSF) under grant no. RNMS11-07444 (KI-Net). PD is on leave from CNRS, Institut de Math\'ematiques de Toulouse, France. AB acknowledges support from the IDEX-UNITI-ANR-11-IDEX-0002-Transversalit\'e-Projet MUSE "Multi-disciplinary study of emergence phenomena".

\medskip
\noindent
{\bf Data statement:} no new data were collected in the course of this research.

\bigskip

\end{document}